\def\Iscr{\mathcal{I}}
\definecolor{darkgreen}{rgb}{0,0.5,0}
\newcommand{\opt}{\mbox{\scriptsize\rm OPT}}
\newcommand{\lp}{\mbox{\scriptsize\rm LP}}
\newtheorem{theorem}{Theorem}
\newtheorem{lemma}[theorem]{Lemma}
\newtheorem{corollary}[theorem]{Corollary}
\newtheorem{proposition}[theorem]{Proposition}
\newtheorem{definition}[theorem]{Definition}
\def\prove{\par \noindent \hbox{\bf Proof:}\quad}
\def\endproof{\eol \rightline{$\Box$} \par}
\renewcommand{\endproof}{\hspace*{\fill} {\boldmath $\Box$} \par \vskip0.5em}
\renewcommand{\epsilon}{\varepsilon}
\newcommand{\sfrac}[2]{{\textstyle\frac{#1}{#2}}} 
\newcommand{\rhopathNW}{\rho_{\protect\textnormal{ATSPP}}^{\text{NW}}}
\newcommand{\rhoatspNW}{\rho_{\protect\textnormal{ATSP}}^{\text{NW}}}
\newcommand{\rhopath}{\rho_{\protect\textnormal{ATSPP}}}
\newcommand{\rhoatsp}{\rho_{\protect\textnormal{ATSP}}}
\begin{document}

\title{The asymmetric traveling salesman path LP \\ has constant integrality ratio}
\author{Anna K\"ohne \and Vera Traub \and Jens Vygen}
\date{\small Research Institute for Discrete Mathematics, University of Bonn \\
\texttt{\{koehne,traub,vygen\}@or.uni-bonn.de}}

\begingroup
\makeatletter
\let\@fnsymbol\@arabic
\maketitle
\endgroup

\begin{abstract}
We show that the classical LP relaxation of the asymmetric traveling salesman path problem (ATSPP) has constant integrality ratio.
If $\rhoatsp$ and $\rhopath$ denote the integrality ratios for the asymmetric TSP and its path version, then
$\rhopath\le 4\rhoatsp-3$.

We prove an even better bound for node-weighted instances:
if the integrality ratio for ATSP on node-weighted instances is $\rhoatsp^{\text NW}$,
then the integrality ratio for ATSPP on node-weighted instances is at most $2\rhoatsp^{\text NW}-1$.

Moreover, we show that for ATSP node-weighted instances and unweighted digraph instances are almost equivalent.
From this we deduce a lower bound of $2$ on the integrality ratio of unweighted digraph instances.

\end{abstract}

\section{Introduction}

In the asymmetric traveling salesman path problem (ATSPP), we are given a directed graph $G=(V,E)$, two vertices $s,t\in V$,
and weights $c:E\to\mathbb{R}_{\ge 0}\cup\{\infty\}$. We look for a sequence $s=v_0,v_1,\ldots,v_k=t$ that contains
every vertex at least once (an $s$-$t$-tour); the goal is to minimize $\sum_{i=1}^k c(v_{i-1},v_i)$. 
Equivalently, we can assume that $G$ is complete and the triangle
inequality $c(u,v)+c(v,w)\ge c(u,w)$ holds for all $u,v,w\in V$, and require the sequence to contain every vertex exactly once.

The special case $s=t$ is known as the asymmetric traveling salesman problem (ATSP).
In a recent breakthrough, \cite{SveTV18} found the first constant-factor approximation algorithm for ATSP,
and they also proved that its standard LP relaxation has constant integrality ratio.

\cite{FeiS07} showed that any $\alpha$-approximation algorithm for ATSP implies a $(2\alpha+\epsilon)$-approximation algorithm
for ATSPP (for any $\epsilon>0$).
Hence ATSPP also has a constant-factor approximation algorithm.
In this paper we prove a similar relation for the integrality ratios.
This answers an open question by \cite{FriGS16}.

Given that the upper bound on the integrality ratio by \cite{SveTV18} is a large constant that will probably be improved in the future, such a 
blackbox result seems particulary desirable. 
Any improved upper bound on the integrality ratio for ATSP then immediately implies a better bound for the path version.

\subsection{The linear programming relaxation}

The classical linear programming relaxation for ATSPP (for $s\ne t$) is
 \begin{equation}\label{eq:subtour_lp_path}
  \begin{aligned}
    \min c(x)  \\
   s.t. & &  x(\delta^-(s)) -x(\delta^+(s)) =& -\! 1 & \\
   & &  x(\delta^-(t)) -x(\delta^+(t)) =&\ 1 & \\
   & &  x(\delta^-(v)) -x(\delta^+(v)) =&\ 0 & & \text{ for } v\in V\setminus\{s,t\} \\
   & &  x(\delta(U)) \ge&\ 2 & &\text{ for } \emptyset \ne U \subseteq V\setminus\{s,t\} \\
   & & x_e \ge & \ 0 & &\text{ for } e\in E
  \end{aligned}
  \tag{ATSPP LP}
 \end{equation}
Here (and henceforth) we write $c(x):=\sum_{e\in E} c(e) x_{e}$, $x(F):=\sum_{e\in F}x_{e}$, 
$\delta^+(U):=\{(u,v)\in E: u\in U,\, v\in V\setminus U\}$, $\delta^-(U):=\delta^+(V\setminus U)$, $\delta(U):=\delta^-(U)\cup\delta^+(U)$,
$\delta^+(v):=\delta^+(\{v\})$, and $\delta^-(v):=\delta^-(\{v\})$.
For an instance $\Iscr$ we denote by $\lp_{\Iscr}$ the value of an optimum solution to \eqref{eq:subtour_lp_path} and by $\opt_{\Iscr}$
the value of an optimum integral solution.
If the instance is clear from the context, we will sometimes simply write $\lp$ and $\opt$.
Note that the integral solutions of \eqref{eq:subtour_lp_path} are precisely the incidence vectors of multi-digraphs $(V,F)$ 
that become connected and Eulerian by adding one edge $(t,s)$.
Hence they correspond to walks from $s$ to $t$ that visit all vertices, in other words: $s$-$t$-tours.

The integrality ratio of \eqref{eq:subtour_lp_path}, denoted by $\rhopath$, is the maximal ratio of an optimum integral solution and an optimum fractional solution;
more precisely $\sup_{\Iscr} \frac{\opt_{\Iscr}}{\lp_{\Iscr}}$, 
where the supremum goes over all instances $\Iscr=(G,c,s,t)$ with $s\ne t$ for which
the denominator is nonzero and finite.
\cite{NagR08} proved that $\rhopath=O(\sqrt{n})$, where $n=|V|$.
This bound was improved to $O(\log n)$ by \cite{FriSS13} and to $O(\log n/  \log\log n)$ by \cite{FriGS16}.
In this paper we prove that the integrality ratio of \eqref{eq:subtour_lp_path} is in fact constant.

Let $\rhoatsp$ denote the integrality ratio of the classical linear programming relaxation for ATSP:
 \begin{equation}\label{eq:subtour_lp_atsp}
  \begin{aligned}
   \min c(x)  \\
   s.t.& &  x(\delta^-(v)) -x(\delta^+(v)) =&\ 0 & & \text{ for } v\in V \\
   & &  x(\delta(U)) \ge&\ 2 & &\text{ for } \emptyset \ne U \subsetneq V \\
   & & x_e \ge & \ 0 & &\text{ for } e\in E
  \end{aligned}
    \tag{ATSP LP}
 \end{equation}
 
\cite{SveTV18} proved that $\rhoatsp$ is a constant.
By an infinite sequence of instances, \cite{ChaGK06} showed that $\rhoatsp\ge 2$.
It is obvious that $\rhopath\ge\rhoatsp$: split an arbitrary vertex of an ATSP instance
into two copies, one (called $s$) inheriting the outgoing edges, and one (called $t$) inheriting the entering edges; add an edge $(t,s)$ of 
cost zero and with $x_{(t,s)}:=x(\delta^+(s))-1$.
Figure \ref{fig:example} displays a simpler family of examples, due to \cite{FriGS16}, showing that $\rhopath\ge 2$.

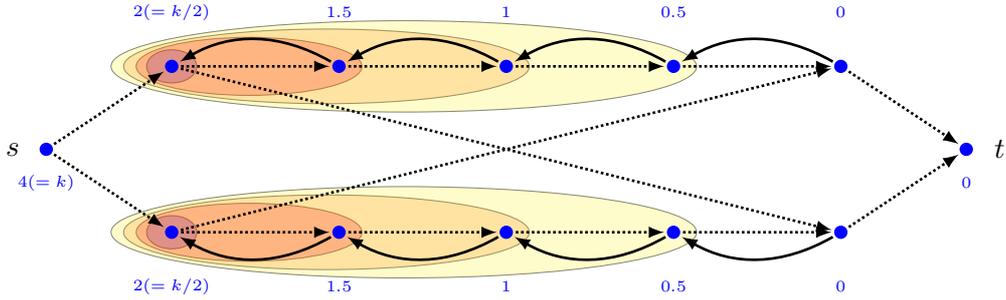
\begin{figure}[ht]
\begin{center}
\begin{tikzpicture}[scale=1.1]
\tikzstyle{vertex}=[blue,circle,fill,minimum size=5,inner sep=0,outer sep=1]
\tikzstyle{expensiveedge}=[->, >=latex, line width=1]
\tikzstyle{cheapedge}=[->,densely dotted,>=latex, line width=1]
\tikzstyle{a}=[blue]
\def\opac{0.2}
\def\opacline{0.5}

\draw[opacity=\opacline](4.275,2) ellipse ({3.5} and {0.55});
\draw[opacity=\opacline](3.35,2) ellipse ({2.425} and {0.45});
\draw[opacity=\opacline](2.425,2) ellipse ({1.35} and {0.35});
\draw[opacity=\opacline](1.5,2) ellipse ({0.3} and {0.2});
\fill[yellow, opacity=\opac] (4.275,2) ellipse ({3.5} and {0.55});
\fill[orange, opacity=\opac] (3.35,2) ellipse ({2.425} and {0.45});
\fill[red, opacity=\opac] (2.425,2) ellipse ({1.35} and {0.35});
\fill[blue!50!purple, opacity=\opac] (1.5,2) ellipse ({0.3} and {0.2});
\draw[opacity=\opacline](4.275,0) ellipse ({3.5} and {0.55});
\draw[opacity=\opacline](3.35,0) ellipse ({2.425} and {0.45});
\draw[opacity=\opacline](2.425,0) ellipse ({1.35} and {0.35});
\draw[opacity=\opacline](1.5,0) ellipse ({0.3} and {0.2});
\fill[yellow, opacity=\opac] (4.275,0) ellipse ({3.5} and {0.55});
\fill[orange, opacity=\opac] (3.35,0) ellipse ({2.425} and {0.45});
\fill[red, opacity=\opac] (2.425,0) ellipse ({1.35} and {0.35});
\fill[blue!50!purple, opacity=\opac] (1.5,0) ellipse ({0.3} and {0.2});

\node at ( -0.4, 1) {$s$};

\node[vertex] (s)  at ( 0, 1) {};

\node[vertex] (a1)  at ( 1.5, 2) {};
\node[vertex] (b1)  at ( 1.5, 0) {};
\node[vertex] (a2)  at ( 3.5, 2) {};
\node[vertex] (b2)  at ( 3.5, 0) {};
\node[vertex] (a3)  at ( 5.5, 2) {};
\node[vertex] (b3)  at ( 5.5, 0) {};
\node[vertex] (a4)  at ( 7.5, 2) {};
\node[vertex] (b4)  at ( 7.5, 0) {};
\node[vertex] (a5)  at ( 9.5, 2) {};
\node[vertex] (b5)  at ( 9.5, 0) {};
\node[vertex] (t)  at ( 11, 1) {};
\node at ( 11.4, 1) {$t$};

\draw[cheapedge] (s) to (a1);
\draw[cheapedge] (s) to (b1);
\draw[cheapedge] (a1) to (a2);
\draw[cheapedge] (a2) to (a3);
\draw[cheapedge] (a3) to (a4);
\draw[cheapedge] (a4) to (a5);
\draw[cheapedge] (a1) to (b5);
\draw[cheapedge] (b1) to (b2);
\draw[cheapedge] (b2) to (b3);
\draw[cheapedge] (b3) to (b4);
\draw[cheapedge] (b4) to (b5);
\draw[cheapedge] (b1) to (a5);
\draw[cheapedge] (a5) to (t);
\draw[cheapedge] (b5) to (t);
\draw[expensiveedge, bend right] (a5) to (a4);
\draw[expensiveedge, bend right] (a4) to (a3);
\draw[expensiveedge, bend right] (a3) to (a2);
\draw[expensiveedge, bend right] (a2) to (a1);
\draw[expensiveedge, bend left] (b5) to (b4);
\draw[expensiveedge, bend left] (b4) to (b3);
\draw[expensiveedge, bend left] (b3) to (b2);
\draw[expensiveedge, bend left] (b2) to (b1);
\begin{tiny}
\node[a] at (0, 0.6) {$4(=k)$};
\node[a] at (1.5,2.65) {$2(=k/2)$};
\node[a] at (1.5,-0.65){$2(=k/2)$};
\node[a] at (3.5,2.65) {$1.5$};
\node[a] at (3.5,-0.65){$1.5$};
\node[a] at (5.5,2.65) {$1$};
\node[a] at (5.5,-0.65){$1$};
\node[a] at (7.5,2.65) {$0.5$};
\node[a] at (7.5,-0.65){$0.5$};
\node[a] at (9.5,2.65) {$0$};
\node[a] at (9.5,-0.65){$0$};
\node[a] at (11, 0.6) {$0$};

\end{tiny}
\end{tikzpicture}
\end{center}
\caption{Example with integrality ratio approaching 2 as the number of vertices increases. 
Setting $x_e:=\frac{1}{2}$ for all shown edges defines a feasible solution of \eqref{eq:subtour_lp_path}. 
If the $2k$ curved edges have cost 1 and the dotted edges have cost 0, we have $\lp=c(x)=k$, but any $s$-$t$-tour costs at least $2k-1$. 
(In the figure, $k=4$.) 
Setting $y_U=\frac{1}{2}$ for the vertex sets indicated by the ellipses and 
$a_v$ as shown in blue defines an optimum solution of \eqref{eq:dual_subtour_lp_path}. \label{fig:example}}
\end{figure}

\subsection{Our results and techniques}

Our main result says that $\rhopath\le 4\rhoatsp-3$.
Together with \cite{SveTV18}, this implies a constant integrality ratio for \eqref{eq:subtour_lp_path}.

Similarly as \cite{FeiS07}, we transform our ATSPP instance to an ATSP instance by
adding a feedback path from $t$ to $s$ and work with an integral solution to this ATSP instance.
This may use the feedback path several times and hence consist of several $s$-$t$-walks in the
original instance. We now merge these to a single $s$-$t$-walk that contains all vertices.
In contrast to \cite{FeiS07}, the merging procedure cannot use an optimum $s$-$t$-tour, but only
an LP solution. 
Our merging procedure is similar to one step of the approximation algorithm for ATSP by 
\cite{SveTV18}, but our analysis is more involved. 
The main difficulty is that the reduction of ATSP to so-called ``laminarly-weighted'' instances used 
by \cite{SveTV18} does not work for the path version. 

In Section \ref{sect:bound_int_ratio}, we describe our merging procedure and obtain a first bound on the cost of our 
single $s$-$t$-walk that contains all vertices. 
However, this bound still depends on the difference of two dual LP variables corresponding to the vertices $s$ and $t$.
In Section \ref{sect:bound_pot_difference} we give a tight upper bound on this value, which will imply our main result 
$\rhopath\le 4\rhoatsp-3$.

The main lemma that we use to prove this bound essentially says that adding an edge $(t,s)$ of cost equal to the LP value 
does not change the value of an optimum LP solution. 
Note that using the new edge $(t,s)$ with value one or more is obviously pointless, 
but it is not clear that this edge will not be used at all.

For node-weighted instances we obtain a better result:
if the integrality ratio for ATSP on node-weighted instances is $\rhoatsp^{\text NW}$,
then the integrality ratio for ATSPP on node-weighted instances is at most $2\rhoatsp^{\text NW}-1$.
\cite{Sve15} showed that $\rhoatsp^{\text NW}\le 13$.

\cite{BoyEM15} gave a family of node-weighted instances that shows $\rhoatsp^{\text NW}\ge 2$.
In Section \ref{sec:nodeweighted=unweighted} we observe that for ATSP
 node-weighted instances behave in the same way as unweighted instances.
Hence for ATSP there is a family of unweighted digraphs whose integrality ratio tends to 2.
Therefore such a family exists also for ATSPP.

\section{Preliminaries}

Given an instance $(G,c,s,t)$ and an optimum solution $x^*$ to \eqref{eq:subtour_lp_path}, we
may assume that $G=(V,E)$ is the support graph of $x^*$; so $ x^*_{e}>0$ for all $e\in E$.
(This is because omitting edges $e$ with $x^*_e=0$ does not change the optimum LP value and can 
only increase the cost of an optimum integral solution.)
We consider the dual LP of \eqref{eq:subtour_lp_path}:
  \begin{equation}\label{eq:dual_subtour_lp_path}
  \begin{aligned}
    \max & &  a_t - a_s + \sum_{\emptyset \ne U \subseteq V\setminus\{s,t\}} \! 2 y_U \\
    s.t. & & a_w - a_v + \sum_{U: e\in\delta(U)} \! y_U \le&\ c(e) & &\text{ for } e=(v,w)\in E \\
   & & y_U \ge&\ 0 & &\text{ for } \emptyset \ne U \subseteq V\setminus\{s,t\}. 
  \end{aligned}
  \tag{ATSPP DUAL}
 \end{equation}
The \emph{support} of $y$ is the set of nonempty subsets $U$ of $V\setminus\{s,t\}$ for which $y_U>0$. We denote it by $\text{supp}(y)$.
We say that a dual solution $(a,y)$ \emph{has laminar support} 
if for any two nonempty sets $A,B\in\text{supp}(y)$ 
we have $A\cap B=\emptyset$, $A\subseteq B$, or $B\subseteq A$.
See Figure \ref{fig:example} for an example.
We recall some well-known properties of primal and dual LP solutions (cf.\ \cite{SveTV18}) and sketch proofs for sake of completeness:

\begin{proposition}
\label{duallaminar}
Let $(a,y)$ be an optimum solution to \eqref{eq:dual_subtour_lp_path}. Then there is a vector $y'$ such that 
$(a, y')$ is an optimum solution to \eqref{eq:dual_subtour_lp_path} and has laminar support.
\end{proposition}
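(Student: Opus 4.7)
The plan is the classical uncrossing argument, applied while keeping the $a$-component of the dual solution fixed. Let $\mathcal{Y}$ denote the set of all vectors $y$ for which $(a,y)$ is an optimum solution to \eqref{eq:dual_subtour_lp_path}. Since $a$ is fixed and the coefficient of every $y_U$ in the dual objective equals $2$, optimality among feasible duals with the given $a$-part amounts to fixing the value of $\sum_U y_U$; so $\mathcal{Y}$ is a bounded polyhedron and hence a compact polytope. On $\mathcal{Y}$ I would consider the continuous potential
\[
\phi(y) \,:=\, \sum_{\emptyset \ne U \subseteq V \setminus \{s,t\}} y_U \cdot |U|^2,
\]
which therefore attains a maximum at some $y^* \in \mathcal{Y}$. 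The claim is that this $y^*$ already has laminar support, so one may take $y' := y^*$.

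Suppose, for contradiction, that there are $A, B \in \text{supp}(y^*)$ that cross, i.e.\ $A \cap B$, $A \setminus B$, and $B \setminus A$ are all nonempty. Set $\epsilon := \min(y^*_A, y^*_B) > 0$ and define $y''$ by decreasing both $y^*_A$ and $y^*_B$ by $\epsilon$ and increasing both $y^*_{A \cap B}$ and $y^*_{A \cup B}$ by $\epsilon$, leaving all other entries unchanged. Since $A, B \subseteq V \setminus \{s,t\}$, both $A \cap B$ and $A \cup B$ are nonempty subsets of $V \setminus \{s,t\}$, so they are admissible indices for $y$. The net change in $\sum_U y_U$ is zero, so the dual objective is preserved. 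It remains to verify that $(a, y'')$ is feasible and that $\phi(y'') > \phi(y^*)$; the latter will contradict the maximality of $y^*$ and finish the proof.

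For feasibility, the left-hand side of the dual constraint at an edge $e = (v,w)$ changes by $\epsilon \Delta_e$, where
\[
\Delta_e \,:=\, \mathbf{1}_{e \in \delta(A \cap B)} + \mathbf{1}_{e \in \delta(A \cup B)} - \mathbf{1}_{e \in \delta(A)} - \mathbf{1}_{e \in \delta(B)}.
\]
A short case check based on which of the four regions $A \cap B$, $A \setminus B$, $B \setminus A$, $V \setminus (A \cup B)$ contain $v$ and $w$ shows $\Delta_e \le 0$ in every one of the $16$ cases; this is the standard symmetric submodularity of the cut function. For the potential, since $|A \cap B| + |A \cup B| = |A| + |B|$ while $\{|A \cap B|, |A \cup B|\} \ne \{|A|, |B|\}$ (because $A$ and $B$ cross), strict convexity of $t \mapsto t^2$ yields $|A \cap B|^2 + |A \cup B|^2 > |A|^2 + |B|^2$, so $\phi(y'') - \phi(y^*) = \epsilon\bigl(|A \cap B|^2 + |A \cup B|^2 - |A|^2 - |B|^2\bigr) > 0$.

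The main obstacle is the mild case analysis establishing $\Delta_e \le 0$; the only case with strict inequality is $v \in A \setminus B$ and $w \in B \setminus A$ (or the symmetric variant), where $e$ crosses both $A$ and $B$ but neither $A \cap B$ nor $A \cup B$, giving $\Delta_e = -2$. All other cases give $\Delta_e = 0$. Nothing else in the argument depends on the sets being restricted to $V \setminus \{s,t\}$ except the obvious closure of this family under intersections and unions.
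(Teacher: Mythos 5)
Your proof is correct. It follows the same basic strategy as the paper's one-line sketch -- pick an optimum dual $y$ with the given $a$-part that is extremal for a strictly uncrossing potential, then show a crossing pair of support sets would allow a feasibility-preserving swap that improves the potential -- but you make two different (equally standard) choices: you uncross toward $A\cap B, A\cup B$ and maximize $\sum_U y_U|U|^2$, whereas the paper uncrosses toward $A\setminus B, B\setminus A$ and minimizes $\sum_U y_U|U|$. The paper's variant relies on posi-modularity of the cut function ($\mathbf{1}_{e\in\delta(A\setminus B)}+\mathbf{1}_{e\in\delta(B\setminus A)}\le\mathbf{1}_{e\in\delta(A)}+\mathbf{1}_{e\in\delta(B)}$) and gets strict decrease of the linear potential ``for free'' because $|A\setminus B|+|B\setminus A|=|A|+|B|-2|A\cap B|<|A|+|B|$; your variant relies on submodularity and needs the strictly convex weight $|U|^2$ because $|A\cap B|+|A\cup B|=|A|+|B|$ leaves the linear potential unchanged. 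Both directions keep the indices inside $V\setminus\{s,t\}$ (in the paper's version $A\setminus B,B\setminus A\ne\emptyset$ by the crossing assumption; in yours $A\cap B\ne\emptyset$ by the crossing assumption and $A\cup B\subseteq V\setminus\{s,t\}$ trivially). So either proof is fine; neither buys anything over the other here. One small stylistic point: your compactness argument for $\mathcal{Y}$ is not strictly necessary -- as in the paper, it suffices that $\mathcal{Y}$ is a nonempty closed polyhedron on which $\sum_U y_U|U|$ is bounded below (or, in your version, on which $\sum_U y_U|U|^2$ is bounded above by $(\text{const})\cdot n^2$), since a linear functional bounded on a nonempty polyhedron attains its extremum there.
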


\prove
Among all $y'$ such that $(a,y')$ is an optimum dual solution, choose $y'$ so that $\sum_{U}y'_U|U|$ is minimum. 
Then $(a,y')$ has laminar support: suppose $y'_A>0$ and $y'_B>0$ and $A\cap B,A\setminus B,B\setminus A\not=\emptyset$, 
then we could decrease $y'_A$ and $y'_B$ and increase $y'_{A\setminus B}$ and $y'_{B\setminus A}$
while maintaining dual feasibility.
\endproof

\begin{proposition}
\label{structureoftightsets}
 Let $(G,c,s,t)$ be an instance of ATSPP, where $G$ is the support graph of an optimum solution $x^*$ to \eqref{eq:subtour_lp_path}.
Let $(a,y)$ be an optimum solution of \eqref{eq:dual_subtour_lp_path}.
Let $U\in\{V\}\cup\textnormal{supp}(y)$.
Then the strongly connected components of $G[U]$ can be numbered $U_1,\ldots,U_l$ such that
$\delta^-(U)=\delta^-(U_1)$, $\delta^+(U)=\delta^+(U_l)$, and 
$\delta^+(U_i)=\delta^-(U_{i+1})\ne \emptyset$ for $i=1,\ldots,l-1$.
If $U=V$, then $s\in U_1$ and $t\in U_l$.
\end{proposition}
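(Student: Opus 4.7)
The plan is to combine complementary slackness, flow conservation, and a short counting argument. Fix $U \in \{V\} \cup \textnormal{supp}(y)$, number the strongly connected components $C_1,\dots,C_l$ of $G[U]$ in some topological order of the condensation, and let $e_{j,i}$ denote the total $x^*$-value of edges in $G$ from $C_j$ to $C_i$ for $j < i$.

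The first step is to pin down the boundary flow. If $U \in \textnormal{supp}(y)$, then $y_U > 0$ and complementary slackness gives $x^*(\delta(U))=2$, which flow conservation splits evenly into $x^*(\delta^-(U))=x^*(\delta^+(U))=1$; if $U=V$, both values are $0$. For $U = V$ I also need $s \in C_1$ and $t \in C_l$: since $\delta^-(C_1) = \emptyset$ in $G$ by the topological order, were $s \notin C_1$, then flow balance together with the cut constraint $x^*(\delta(C_1)) \ge 2$ (when $t \notin C_1$) or the net-inflow requirement at $t$ (when $t \in C_1$) would each yield a contradiction; symmetrically $t \in C_l$.

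Next, for every prefix $W_k := C_1 \cup \dots \cup C_k$ with $1 \le k < l$, I would show $x^*(\delta^-(W_k)) = x^*(\delta^+(W_k)) = 1$. For $U \in \textnormal{supp}(y)$, the topological order forbids edges in $G[U]$ from $U \setminus W_k$ to $W_k$, so $\delta^-(W_k) \subseteq \delta^-(U)$ gives an upper bound of $1$, while the cut constraint on $W_k \subseteq V \setminus \{s,t\}$ supplies the matching lower bound. For $U = V$, $\delta^-(W_k) = \emptyset$ by topological order, and the flow imbalance forced by $s \in W_k$, $t \notin W_k$ yields $x^*(\delta^+(W_k)) = 1$. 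Applying this at $k = 1$ (and a symmetric upward-closed argument) also identifies $\delta^-(U) = \delta^-(C_1)$ and $\delta^+(U) = \delta^+(C_l)$, since $G$ is the support graph of $x^*$.

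The counting step then closes the proof. Because no edge from $V \setminus U$ reaches $C_i$ for $i \ge 2$, the cut constraint (respectively flow balance at $t$ when $U = V$ and $i = l$) gives $x^*(\delta^-(C_i)) = \sum_{j<i} e_{j,i} \ge 1$ for each $i \ge 2$, and $x^*(\delta^+(W_k)) = \sum_{j \le k < i} e_{j,i} = 1$ for $1 \le k < l$. Summing yields
\[
\sum_{j<i} e_{j,i} \;\le\; \sum_{j<i}(i-j)\,e_{j,i} \;=\; \sum_{k=1}^{l-1} x^*(\delta^+(W_k)) \;=\; l-1 \;\le\; \sum_{i=2}^{l} x^*(\delta^-(C_i)) \;=\; \sum_{j<i} e_{j,i}.
\]
Equality throughout forces $e_{j,i} = 0$ whenever $i > j+1$ and $e_{i,i+1} = 1$ for all $1 \le i < l$, so in $G$ the only edges between distinct components go from $C_i$ to $C_{i+1}$, giving $\delta^+(C_i) = \delta^-(C_{i+1})$, nonempty because $G$ is the support graph. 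The main obstacle I expect is the bookkeeping: one must carefully separate edges inside $G[U]$ from those across $\delta(U)$ and handle the $U = V$ and $U \in \textnormal{supp}(y)$ cases uniformly. Once the flows are localized correctly to inter-component positions, the two-sided inequality above is what makes the structure rigid.
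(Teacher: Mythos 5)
Your proof is correct, and it takes a genuinely different route from the paper's. The paper argues by induction on the number $l$ of strongly connected components: after fixing a topological order it shows $\delta^-(U_1)=\delta^-(U)$, then $\delta^+(U_1)=\delta^-(U\setminus U_1)$ with $x^*$-value $1$, and applies the inductive hypothesis to $U\setminus U_1$ (the statement is slightly generalized so that $U\setminus U_1$, which need not lie in $\mathrm{supp}(y)$, is still admissible). You instead close the argument non-inductively: after pinning down the boundary ($\delta^-(U)=\delta^-(C_1)$, $\delta^+(U)=\delta^+(C_l)$) and showing every prefix cut has $x^*$-value exactly $1$, you use the double-counting identity $\sum_{j<i}(i-j)e_{j,i}=\sum_{k=1}^{l-1}x^*(\delta^+(W_k))=l-1\le\sum_{i=2}^{l}x^*(\delta^-(C_i))$, and the resulting forced equalities kill all non-consecutive $e_{j,i}$ and certify $x^*(\delta^-(C_i))=1$ at once. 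Both proofs rest on the same three facts (prefix cuts equal $1$, SCC cuts at least $1$, support graph), so the content is closely related; the paper's peeling-off is shorter and avoids having to set up $\delta^+(U)=\delta^+(C_l)$ separately in advance (you need it so that $x^*(\delta^+(W_k))$ equals the internal sum $\sum_{j\le k<i}e_{j,i}$, and your ``symmetric upward-closed argument'' should be spelled out), while your global counting makes transparent exactly which constraints must simultaneously be tight. A minor presentational nit: you write $x^*(\delta^-(C_i))=\sum_{j<i}e_{j,i}\ge 1$ as a single display, but the equality and the inequality come from different sources (no boundary edges reach $C_i$, resp.\ cut/flow constraints); worth separating when writing this up.
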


\prove
By complementary slackness, $y_U>0$ implies $x^*(\delta(U))=2$ and hence $x^*(\delta^-(U))=1$.
We prove the statement of the Proposition for $U=V$ and every set $\emptyset \ne U \subseteq V\setminus\{s\}$ with
$x^*(\delta^-(U))=1$.
Let $U_1,\ldots,U_l$ be a topological order of the strongly connected components of $G[U]$.
If $U=V$, we have $s\in U_1$ and $t\in U_l$ because for every vertex $v$ in $G$, $v$ is reachable from $s$, and $t$ is reachable from $v$.

We now use induction on $l$. For $l=1$, the statement is trivial.
Now assume $l>1$. 
If $U\ne V$, we have $s\notin U$ and thus $1 \le x^*(\delta^-(U_1)) \le x^*(\delta^-(U)) =1$. 
Thus, in any case (also if $U=V$) $\delta^-(U_1) = \delta^-(U)$.
This implies $\delta^-(U\setminus U_1) \subseteq \delta^+(U_1)$.

If $U=V$, we have $s\in U_1$, $t\notin U_1$, and $x^*(\delta^-(U_1))=0$; therefore
we have $x^*(\delta^+(U_1)) =1$. 
Otherwise, (for $U\ne V$) we have $s\notin U$ and thus $x^*(\delta^+(U_1)) \le x^*(\delta^-(U_1))=1$.
So in both cases $\delta^-(U\setminus U_1) \subseteq \delta^+(U_1)$ and 
$x^*(\delta^+(U_1))\le 1 \le x^*(\delta^-(U\setminus U_1))$.
Thus, we have $\delta^-(U\setminus U_1) =\delta^+(U_1)$ and $x^*(\delta^-(U\setminus U_1))=1$.
Hence, applying the induction hypothesis to $U\setminus U_1$ completes the proof.
\endproof

\begin{proposition}
\label{prop:enterandleaveonlyonce}
 Let $(G,c,s,t)$ be an instance of ATSPP, where $G$ is the support graph of an optimum solution to \eqref{eq:subtour_lp_path}.
Let $(a,y)$ be an optimum solution to \eqref{eq:dual_subtour_lp_path} with laminar support.
Let $\bar U\in\{V\}\cup\textnormal{supp}(y)$ and $v,w\in \bar U$.
If $w$ is reachable from $v$ in the induced subgraph $G[\bar U]$, then there is a $v$-$w$-path in $G[\bar U]$ 
that enters and leaves every set $U\in \text{supp}(y)$ at most once.
\end{proposition}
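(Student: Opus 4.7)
The approach is strong induction on $|\bar U|$. For the base case, where no element of $\text{supp}(y)$ is a strict subset of $\bar U$, any $v$-$w$-path in $G[\bar U]$ (which exists by reachability) trivially satisfies the conclusion. For the inductive step, let $U^{(1)}, \ldots, U^{(k)}$ be the maximal elements of $\{U \in \text{supp}(y) : U \subsetneq \bar U\}$; these are pairwise disjoint by laminarity. I would form the contracted digraph $\bar G$ from $G[\bar U]$ by shrinking each $U^{(j)}$ to a supervertex $u^{(j)}$, then construct an outer walk in $\bar G$ that visits every supervertex at most once, and finally decontract using the induction hypothesis inside each visited $U^{(j)}$.

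There are three cases for the outer walk. \emph{Case 1:} $v$ and $w$ project to distinct vertices $v', w'$ in $\bar G$; take any simple $v'$-$w'$-path, which exists because reachability is preserved under contraction. \emph{Case 2:} $v, w \in U^{(j^*)}$ for some $j^*$ and $w$ is reachable from $v$ in $G[U^{(j^*)}]$; apply induction directly to $(U^{(j^*)}, v, w)$. \emph{Case 3:} $v, w \in U^{(j^*)}$ but $w$ is not reachable from $v$ in $G[U^{(j^*)}]$; here I would project any $v$-$w$-walk in $G[\bar U]$ to a closed walk in $\bar G$ based at $u^{(j^*)}$ (of positive length, since the walk must exit $U^{(j^*)}$) and extract a simple directed cycle through $u^{(j^*)}$ by iteratively shortcutting at repeated supervertices. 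Proposition \ref{structureoftightsets} applied to each visited $U^{(j)}$ guarantees that the decontracted outer walk enters $U^{(j)}$ at some $p \in U^{(j)}_1$ and leaves at some $q \in U^{(j)}_{l_j}$; the inductive hypothesis then supplies a $p$-$q$-path in $G[U^{(j)}]$ (with $v$ replacing $p$ or $w$ replacing $q$ when $v$ or $w$ lies in $U^{(j)}$) satisfying the desired property, which I would concatenate with the outer skeleton.

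For the verification, every $U \in \text{supp}(y)$ is either disjoint from $\bar U$, strictly contains $\bar U$ (both trivial since the path lies in $G[\bar U]$), equals some $U^{(j)}$, or is a strict subset of some $U^{(j)}$. In Cases 1 and 2 each $U^{(j)}$ is visited at most once by simplicity of the outer skeleton, while induction handles the interior. The main obstacle is Case 3, where the walk now has two separate stretches inside $U^{(j^*)}$ (a first stretch from $v$ to the first exit $q$ and a second stretch from the re-entry point $p$ to $w$), so a naive concatenation could cause some $U \in \text{supp}(y)$ with $U \subsetneq U^{(j^*)}$ to be entered by both stretches.

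To resolve this obstacle, I would appeal to Proposition \ref{structureoftightsets} for $U^{(j^*)}$: since $w$ is not reachable from $v$ in $G[U^{(j^*)}]$, the indices $i_v, i_w$ of the SCCs of $G[U^{(j^*)}]$ containing $v$ and $w$ satisfy $i_v > i_w$, so the first stretch is confined to SCCs $U^{(j^*)}_{i_v}, \ldots, U^{(j^*)}_{l_{j^*}}$ and the second to $U^{(j^*)}_1, \ldots, U^{(j^*)}_{i_w}$, which are disjoint. For any $U \in \text{supp}(y)$ with $U \subsetneq U^{(j^*)}$, the first SCC $U_1$ of $G[U]$ is strongly connected in $G$ and therefore lies inside a single SCC of $G[U^{(j^*)}]$; since by Proposition \ref{structureoftightsets} (applied to $U$) all entries to $U$ are into $U_1$, at most one of the two stretches can enter $U$, and the symmetric argument using the last SCC of $G[U]$ bounds the exit count. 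This yields the required $v$-$w$-path.
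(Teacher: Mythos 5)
Your proof is correct, but it follows a genuinely different route from the paper's. The paper uses a top-down local repair: start from an arbitrary $v$-$w$-path in $G[\bar U]$, repeatedly take a maximal $U\in\text{supp}(y)$ that the path enters (or leaves) more than once, observe via Proposition~\ref{structureoftightsets} that the vertices reached just after the first and just after the last entry both lie in the first strongly connected component of $G[U]$, and replace the intervening portion with a path inside $G[U]$; processing the laminar family from the largest sets downward makes this terminate. You instead run a bottom-up induction on $|\bar U|$: contract the maximal members of $\text{supp}(y)$ strictly inside $\bar U$, build a simple skeleton in the quotient (a path, or in Case~3 a cycle through the contracted $U^{(j^*)}$), and decontract each visited supervertex via the inductive hypothesis. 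Both proofs rest on the chain-of-SCCs structure of Proposition~\ref{structureoftightsets}; your Case~3 exploits it more delicately, using unreachability of $w$ from $v$ inside $G[U^{(j^*)}]$ to force the two stretches $v\to q$ and $p\to w$ into disjoint ranges of SCCs, so no nested tight set is crossed twice. The paper's version is shorter and fixes violations in place; yours is more constructive, makes the laminar recursion explicit, and carefully isolates the one situation (Case~3) where the final path necessarily passes through a single contracted set twice -- a situation the paper's shortcut argument handles implicitly without naming it.
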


\prove
Let $P$ be a path from $v$ to $w$ in $G[\bar U]$. Repeat the following. 
Let $U$ be a maximal set with $y_U>0$ that $P$ enters or leaves more than once.
If $P$ enters $U$ more than once, let $v'$ be the vertex after entering the first time and $w'$ the vertex after entering the last time.
By Proposition \ref{structureoftightsets}, $v'$ and $w'$ are in the same strongly connected component of $G[U]$.
We replace the $v'$-$w'$-subpath of $P$ by a path in $G[U]$. Proceed analogously if $P$ leaves $U$ more than once.
\endproof

\section{Bounding the integrality ratio}\label{sect:bound_int_ratio}

We first transform an instance and a solution to \eqref{eq:subtour_lp_path} to an instance and a solution to \eqref{eq:subtour_lp_atsp} 
and work with an integral solution of this ATSP instance.
The following lemma is essentially due to \cite{FeiS07}. For completeness, we prove it here again for our setting.
 \begin{lemma}\label{lemma:bound_by_path}
  Let $d\ge 0$ be a constant. Then $\rhopath \le (d+1)\rhoatsp-d$ if
  the following condition holds for every instance $\Iscr=(G,c,s,t)$ of ATSPP, where $G$ is the
  support graph of an optimum solution to \eqref{eq:subtour_lp_path}:
  If there are $s$-$t$-walks $P_1, \ldots, P_k$ ($k>0$) of total cost $L$ in $G$,
  there is a single $s$-$t$-walk $P$ in $G$ with cost $c(P)\leq L+d(k-1)\cdot\lp$ which contains all vertices of $P_1, \ldots, P_k$.
 \end{lemma}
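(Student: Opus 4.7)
The plan is to reduce, in the spirit of \cite{FeiS07}, the ATSPP instance to an ATSP instance by attaching a ``feedback path'' from $t$ to $s$ of total cost $d\cdot\lp$; one then rounds via the ATSP integrality ratio and glues the resulting pieces into a single $s$-$t$-tour using the hypothesis.

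Concretely, given an optimum primal $x^*$ for \eqref{eq:subtour_lp_path} on $\Iscr=(G,c,s,t)$ with $G$ equal to the support of $x^*$, I would introduce a fresh vertex $r$ together with two arcs $(t,r)$ and $(r,s)$ satisfying $c'(t,r)+c'(r,s)=d\lp$, obtaining an ATSP instance $\Iscr'=(G',c')$. Setting $x':=x^*+\mathbf{1}_{(t,r)}+\mathbf{1}_{(r,s)}$, I would verify that $x'$ is feasible for \eqref{eq:subtour_lp_atsp} on $G'$: the two new arcs restore degree balance at $r,s,t$; for any cut $U$ that does not separate $s$ from $t$ one already has $x^*(\delta(U))\ge 2$ (directly from \eqref{eq:subtour_lp_path} when $\{s,t\}\cap U=\emptyset$, and by complementing when $\{s,t\}\subseteq U$); and for any cut that does separate $s$ from $t$, exactly one of the two new arcs lies in the cut, lifting the flow-conservation bound $x^*(\delta(U))\ge 1$ to $\ge 2$. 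Hence $\lp_{\Iscr'}\le(d+1)\lp$.

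By definition of $\rhoatsp$ there is an integer ATSP solution $F^*$ on $G'$ with $c'(F^*)\le\rhoatsp(d+1)\lp$. Because the only arc entering $r$ is $(t,r)$ and the only arc leaving $r$ is $(r,s)$, Eulerianness at $r$ forces $F^*$ to use each of them exactly the same number $k\ge 1$ of times; this is the key reason for subdividing the feedback arc. Deleting these $k$ copies of the feedback path from $F^*$ leaves a multigraph on $V$ that is balanced everywhere except at $s$ (excess $k$ out) and $t$ (excess $k$ in). Since $F^*$ is connected and visits every vertex, standard Eulerian decomposition, splicing any stray closed walks into a connecting walk at a shared vertex, yields $s$-$t$-walks $P_1,\ldots,P_k$ in $G$ whose union covers $V$, with total cost $L=c'(F^*)-kd\lp\le\rhoatsp(d+1)\lp-kd\lp$.

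Finally, applying the hypothesis of the lemma to $P_1,\ldots,P_k$ produces a single $s$-$t$-walk $P$ in $G$ that contains every vertex of $V$, whose cost satisfies
\[
c(P)\;\le\;L+d(k-1)\lp\;\le\;\rhoatsp(d+1)\lp-kd\lp+d(k-1)\lp\;=\;\bigl((d+1)\rhoatsp-d\bigr)\lp,
\]
so $\opt_\Iscr\le c(P)\le\bigl((d+1)\rhoatsp-d\bigr)\lp$ and the claimed bound on $\rhopath$ follows. The delicate point is to engineer the feedback gadget so that $k\ge 1$ is forced (without which the hypothesis would not apply), and to use the connectivity of $F^*$ to guarantee that the walks $P_1,\ldots,P_k$ jointly cover $V$, so that the merged walk $P$ is genuinely an $s$-$t$-tour.
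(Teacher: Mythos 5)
Your proof is correct and follows essentially the same route as the paper's: attach a fresh vertex with a $t$-to-$s$ feedback path of cost $d\cdot\lp$, extend $x^*$ to a feasible ATSP LP solution of value $(d+1)\lp$, round via $\rhoatsp$, strip the $k$ feedback traversals to get $k$ covering $s$-$t$-walks, and merge them using the hypothesis. You spell out the feasibility check of the extended LP solution and the Eulerian decomposition into $s$-$t$-walks, which the paper asserts more tersely, but the argument is the same.
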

 
 \begin{proof} 
  Let $\Iscr=(G,c,s,t)$ be an instance of ATSPP and $x^*$ be an optimum solution to \eqref{eq:subtour_lp_path}; so $\lp=c(x^*)$.
  We may assume that $G$ is the support graph of $x^*$.
  Consider the instance $\Iscr'=(G', c')$ of ATSP that arises from $\Iscr$ as follows.
  We add a new vertex $v$ to $G$ and two edges $(t,v)$ and $(v,s)$ with weights $c'(t,v) =d\cdot\lp$ and $c'(v,s)= 0$.
  Then there is a valid solution of the subtour LP for $\Iscr'$ with cost $(d+1)\cdot\lp$ 
  (extend $x^*$ by setting $x^*_{(t,v)}=x^*_{(v,s)}=1$). 
  Hence there is a solution to ATSP for $\Iscr'$ with cost at most $(d+1)\rhoatsp\cdot\lp$. 
  Let $R$ be such a solution. Then $R$ has to use $(t,v)$ and $(v,s)$ at least once, since it has to visit $v$. 
  By deleting all copies of $(t,v)$ and $(v,s)$ from $R$, we get $k>0$ $s$-$t$-walks in $G$ with total cost at most 
  $(d+1)\rhoatsp\cdot\lp-dk\cdot\lp$ such that every vertex of $G$ is visited by at least one of them. 
  Our assumption now guarantees the existence of a single $s$-$t$-walk $P$ with cost 
  $c(P)\leq (d+1)\rhoatsp\cdot\lp-dk\cdot\lp+d(k-1)\cdot\lp=((d+1)\rhoatsp - d)\cdot\lp$ in $G$, which contains every vertex of $G$. 
  This walk is a solution of ATSPP for $\Iscr$ and thus we have $\rhopath\le(d+1)\rhoatsp-d$ as proposed. 
 \end{proof}
 
The following procedure is similar to one step (``inducing on a tight set'') of the approximation algorithm for ATSP by 
\cite{SveTV18}. 

 \begin{lemma}\label{lemma:constructing_the_path}
   Let $(G,c,s,t)$ be an instance of ATSPP, where $G$ is the support graph of an optimum solution to \eqref{eq:subtour_lp_path}.
   Let $(a,y)$ be an optimum solution to \eqref{eq:dual_subtour_lp_path} with laminar support. 
 
  Let $k>0$ and $P_1, \ldots, P_k$ be $s$-$t$-walks  in $G$ with total cost $L$.  
  Then there is a single $s$-$t$-walk $P$ in $G$ which contains every vertex of $P_1, \ldots, P_k$ and 
  has cost at most $L+(k-1)(\lp+2(a_s-a_t))$.
 \end{lemma}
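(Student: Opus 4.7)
The plan is to prove the lemma by induction on $k$. The base case $k = 1$ is immediate (take $P := P_1$). For $k \ge 2$, it suffices to exhibit a \emph{single merge step}: given any two of the walks, say $P$ and $P'$, construct an $s$-$t$-walk $Q$ in $G$ containing every vertex of $V(P) \cup V(P')$ with $c(Q) \le c(P) + c(P') + \lp + 2(a_s - a_t)$. Replacing $\{P, P'\}$ in the list by $Q$ then leaves $k-1$ walks of total cost at most $L + \lp + 2(a_s - a_t)$, and the inductive hypothesis finishes the proof.

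For the merge, I would form the multi-digraph $H := \chi(P) + \chi(P')$, which is balanced except for excess $2$ outgoing at $s$ and $2$ incoming at $t$, and augment it by an edge set $F$ from $G$ so that $H + F$ becomes the incidence vector of an Eulerian $s$-$t$-walk covering all vertices of $V(P) \cup V(P')$. Since $G$ need not contain any direct $t$-to-$s$ walk (Figure~\ref{fig:example} is such an instance, where $t$ has no outgoing edges in the support graph), the augmentation $F$ cannot be chosen greedily; instead I would use $x^*$ as a fractional guide and build $F$ by processing the laminar family $\text{supp}(y)$ from innermost to outermost, consolidating multiple visits of $H$ to each tight set $U$ into a single entry-exit pair. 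Proposition~\ref{structureoftightsets} provides the chain of strongly connected components inside each tight set $U$ needed to route through it, and Proposition~\ref{prop:enterandleaveonlyonce} provides the single-entry/single-exit subwalks; this is essentially the ``inducing on a tight set'' step of \cite{SveTV18}, adapted to the path version where $s \ne t$.

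The cost bound would follow from the dual feasibility inequality $c(e) \ge a_w - a_v + \sum_{U : e \in \delta(U)} y_U$ for every $e = (v,w) \in E$. Telescoping the $a$-terms gives $a_t - a_s$ along any $s$-$t$-subwalk and $a_s - a_t$ along any $t$-to-$s$ subpatch (of which $F$ contains exactly one, to cancel the excess imbalance), while the $y$-terms are bounded by the number of $\delta(U)$-crossings in $F$, which the consolidation keeps small. Matching these two contributions against the primal-dual identity $\lp = (a_t - a_s) + 2 \sum_U y_U$ should yield exactly $c(F) \le \lp + 2(a_s - a_t)$.

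The main obstacle will be making the consolidation rigorous and obtaining the constant $\lp + 2(a_s - a_t)$ precisely. The factor $2$ in $2(a_s - a_t)$ is the most delicate part: it presumably reflects that the boundary vertices $s$ and $t$ lie in no set of $\text{supp}(y)$, so the $a$-contributions at $s$ and $t$ get counted independently of the $y$-contributions, and the $t$-to-$s$ subpatch in $F$ contributes $(a_s - a_t)$ on top of the $(a_s - a_t)$ ``saved'' by not traversing an extra $s$-to-$t$ copy. Getting this factor right, while keeping the construction entirely within $G$, is where I expect the argument to require the most care.
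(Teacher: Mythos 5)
Your proposal has a genuine gap in the merge step. You define $H := \chi(P) + \chi(P')$, which has net imbalance $-2$ at $s$ and $+2$ at $t$, and you want to add an edge set $F \subseteq E(G)$ so that $H + F$ is the incidence vector of an Eulerian $s$-$t$-walk, i.e.\ has imbalance $-1$ at $s$ and $+1$ at $t$. This forces $F$ to carry one unit of net flow from $t$ to $s$, and you acknowledge this when you say $F$ contains exactly one $t$-to-$s$ subpath. But, as you yourself observe, in instances like Figure~\ref{fig:example} the support graph $G$ has no outgoing edges at $t$ at all, so \emph{no} edge set $F \subseteq E(G)$ can have imbalance $-1$ at $t$. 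This is not a matter of choosing $F$ more cleverly or processing the laminar family in a better order: the parity obstruction at $t$ (and symmetrically the potential absence of edges into $s$) makes the proposed target $H + F$ unattainable whenever $G$ is acyclic near $t$. So the merge step as stated cannot be carried out, and the induction never gets off the ground.

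The paper's proof sidesteps this entirely by never forming a $t$-to-$s$ return path. Instead it processes the strongly connected components $V_1, \dots, V_l$ of $G$ in topological order and, inside each $V_j$, \emph{zigzags} among the $k$ sections $P_1^j, \dots, P_k^j$ using connecting paths $R_i^j$ that stay inside $V_j$ (chosen via Propositions~\ref{structureoftightsets} and \ref{prop:enterandleaveonlyonce} to cross each set in $\textnormal{supp}(y)$ at most once each way). The resulting walk $P$ only ever moves forward across the SCC boundaries, exactly once from $V_j$ to $V_{j+1}$, so it never needs an edge out of $t$ or into $s$. The cost accounting is then direct (no induction): each of the $l$ components contributes $k-1$ connecting paths, paths in different components never cross the same tight set, so the total $c^y$-cost of all $R_i^j$ is at most $(k-1)\sum_U 2y_U = (k-1)(\lp + a_s - a_t)$, and combining this with $\sum_i c^y(P_i) = L + k(a_s - a_t)$ and the identity $c(P) = a_t - a_s + c^y(P)$ gives the claimed bound $L + (k-1)(\lp + 2(a_s - a_t))$. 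If you want to salvage an inductive framing, you would need a merge that, like the paper's construction, interleaves two walks by shuttling back and forth \emph{within} each SCC rather than detouring from $t$ back to $s$.
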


 \begin{proof}
  Let $V_1, \ldots, V_l$ be the vertex sets of the strongly connected components of $G$ in their topological order. 
  Let $P_i^j$ be the section of $P_i$ that visits vertices in $V_j$ (for $i=1, \ldots, k$ and $j=1, \ldots, l$).
  By Proposition \ref{structureoftightsets} applied to $U=V$, none of these sections of $P_i$ is empty.
  (Such a section might consist of a single vertex and no edges, but it has to contain at least one vertex.)

  We consider paths $R_i^j$ in $G$ for $j=1, \ldots, l$ that we will use to connect the walks 
  $P^j_1, \dots, P^j_k$ to a single walk visiting all vertices in $V_j$.
  See Figure \ref{figure:construction of P}.
  If $j$ is odd, let $R_i^j$ (for $i=1, \ldots k-1$) be a path from the last vertex of $P_i^j$ to the first vertex of $P_{i+1}^j$. 
  If $j$ is even, let $R_i^j$ (for $i=2, \ldots, k$) be a path from the last vertex of $P_i^j$ to the first vertex of $P_{i-1}^j$.
  (Such paths exists because $G[V_j]$ is strongly connected.)
  By Proposition \ref{prop:enterandleaveonlyonce} we can choose the paths $R_i^j$ such that they
  do not enter or leave any element of $\text{supp}(y)$ more than once.
    \begin{figure}[h]\label{figure:construction of P}
    \centering
    \begin{tikzpicture}[line cap=round,line join=round,x=1.3cm,y=1.3cm]
      \draw [color=red, line width=0.5pt, dashed] 
      (0.7,4.3)--(4,4.3) arc(90:0:0.3)--(4.3,3) arc(360:270:0.3)--(3.6,2.7) arc(90:180:0.3)--(3.3,0) 
      arc(360:270:0.3)--(0.7,-0.3) arc(270:180:0.3)--(0.4,4) arc(180:90:0.3);
      \draw [color=blue, line width=0.5pt, dashed] 
      (5,4.3)--(6,4.3) arc(90:0:0.3)--(6.3,3) arc(360:270:0.3)--(5.6,2.7) arc(90:180:0.3)--(5.3,1.6) arc(180:270:0.3)--(6,1.3)
      arc(90:0:0.3)--(6.3,0) arc(360:270:0.3)--(4,-0.3) arc(270:180:0.3)--(3.7,2) arc(180:90:0.3)--(4.4,2.3) arc(270:360:0.3)--(4.7,4)
      arc(180:90:0.3);
      \draw [color=darkgreen, line width=0.5pt, dashed] 
      (8,4.3) arc (90:270:0.3)--(9.4,3.7) arc(90:0:0.3)--(9.7,2.6) arc(360:270:0.3)--(9,2.3) arc(90:180:0.3)--(8.7,1.6) 
      arc(360:270:0.3)--(8,1.3) arc(90:180:0.3)--(7.7,0) arc(180:270:0.3)--(10.3,-0.3) arc(270:360:0.3)--(10.6,4) arc(0:90:0.3)--(8,4.3);
      \draw [line width=0.5pt, dashed] (6.4,2.3)--(6,2.3) arc(90:270:0.3)--(6.4,1.7) arc(90:45:0.3);
      \draw [line width=0.5pt, dashed] (6.4,2.3) arc(270:315:0.3);
      \draw [line width=0.5pt, dashed] 
      (7.6,3.3)--(9,3.3) arc(90:-90:0.3)--(8.6,2.7) arc(90:180:0.3)--(8.3,2) arc(360:270:0.3)--(7.6,1.7) arc(90:135:0.3);
      \draw [line width=0.5pt, dashed] (7.6,3.3) arc(270:225:0.3);

      \tikzstyle{rededge} = [color=red, line width=1.5pt, ->, >=latex, dotted]
      \tikzstyle{blueedge} = [color=blue, line width=1.5pt, ->, >=latex, dotted]
      \tikzstyle{greenedge} = [color=darkgreen, line width=1.5pt, ->, , >=latex, dotted]
      \draw [color=red, line width=1pt] (1,2.2)--(2,4)--(4,4) (1,2.07)--(2,3)--(4,3) (1,1.93)--(2,2)--(3,2) (1,1.8)--(2,0)--(3,0);
      \draw [rededge] (4,4)--(1,2.07);
      \draw [rededge] (4,3)--(1,1.93);
      \draw [rededge, dotted] (3,2)--(1,1.8);
      \draw [color=blue!50!red, line width=1pt] (3,0)--(4,0);
      \draw [color=blue, line width=1pt] (4,0)--(6,0) (4,2)--(5,2) (5,3)--(6,3) (5,4)--(6,4);
      \draw [blueedge, dotted] (6,0)--(4,2);
      \draw [blueedge] (5,2)--(5,3);
      \draw [blueedge] (6,3)--(5,4);
      \draw [color=blue!50!black, line width=1pt] (6,4)--(6.4,4);
      \draw [color=blue!50!black, line width=1pt, dotted] (6.4,4)--(6.7,4);
      \draw [color=darkgreen!50!black, line width=1pt, dotted] (7.2,4)--(7.6,4);
      \draw [color=darkgreen!50!black, line width=1pt] (7.6,4)--(8,4);
      \draw [color=darkgreen, line width=1pt] (8,4)--(9,4)--(10,2.2) (9,2)--(10,2) (8,0)--(9,0)--(10,1.8);
      \draw [greenedge] (10,2.2)--(9,2);
      \draw [greenedge, dotted] (10,2)--(8,0);

      \draw [color=black!70!white,line width=0.5pt] 
      (4,4)--(5,4) (4,3)--(5,3) (6,3)--(6.5,3) (7.5,3)--(8,3) (8,3)--(9,3) (9,3)--(10,2) (3,2)--(4,2) (5,2)--(6.5,2) (7.5,2)--(9,2) 
      (6,0)--(6.5,0) (7.5,0)--(8,0);
      \begin{large}
        \draw[color=black] (0.8,2) node {$s$};
        \draw[color=black] (10.2,2) node {$t$};
        \draw [color=red] (2.2,1) node {$V_1$};
        \draw [color=blue] (4.3,1) node {$V_2$};
        \draw [color=darkgreen] (8.4,1) node {$V_l$};

        \draw [fill=black] (2,0) circle (1.5pt);
        \draw [fill=black] (2,2) circle (1.5pt);
        \draw [fill=black] (2,3) circle (1.5pt);
        \draw [fill=black] (2,4) circle (1.5pt);
        \draw [fill=black] (3,4) circle (1.5pt);
        \draw [fill=black] (3,3) circle (1.5pt);
        \draw [fill=black] (3,2) circle (1.5pt);
        \draw [fill=black] (3,0) circle (1.5pt);
        \draw [fill=black] (4,4) circle (1.5pt);
        \draw [fill=black] (4,3) circle (1.5pt);
        \draw [fill=black] (4,2) circle (1.5pt);
        \draw [fill=black] (4,0) circle (1.5pt);
        \draw [fill=black] (5,4) circle (1.5pt);
        \draw [fill=black] (5,3) circle (1.5pt);
        \draw [fill=black] (5,2) circle (1.5pt);
        \draw [fill=black] (5,0) circle (1.5pt);
        \draw [fill=black] (6,4) circle (1.5pt);
        \draw [fill=black] (6,3) circle (1.5pt);
        \draw [fill=black] (6,2) circle (1.5pt);
        \draw [fill=black] (6,0) circle (1.5pt);
        \draw [fill=black] (8,4) circle (1.5pt);
        \draw [fill=black] (8,3) circle (1.5pt);
        \draw [fill=black] (8,2) circle (1.5pt);
        \draw [fill=black] (8,0) circle (1.5pt);
        \draw [fill=black] (9,4) circle (1.5pt);
        \draw [fill=black] (9,3) circle (1.5pt);
        \draw [fill=black] (9,2) circle (1.5pt);
        \draw [fill=black] (9,0) circle (1.5pt);
        \draw [color=black, line width=3pt] (10.05,2.2)--(10.05,1.8);
        \draw [color=black, line width=3pt] (0.95,2.2)--(0.95,1.8);
        
        \draw [fill=black!70!white] (6.8,4) circle (0.5pt);
        \draw [fill=black!70!white] (7,4) circle (0.5pt);
        \draw [fill=black!70!white] (7.2,4) circle (0.5pt);
        \draw [fill=black!70!white] (6.8,3) circle (0.5pt);
        \draw [fill=black!70!white] (7,3) circle (0.5pt);
        \draw [fill=black!70!white] (7.2,3) circle (0.5pt);
        \draw [fill=black!70!white] (6.8,2) circle (0.5pt);
        \draw [fill=black!70!white] (7,2) circle (0.5pt);
        \draw [fill=black!70!white] (7.2,2) circle (0.5pt);
        \draw [fill=black!70!white] (7,0) circle (0.5pt);
        \draw [fill=black!70!white] (7.2,0) circle (0.5pt);
        \draw [fill=black!70!white] (6.8,0) circle (0.5pt);
      \end{large}
    \end{tikzpicture}
    \caption{Construction of $P$. The $s$-$t$-walks $P_1, \dots, P_k$ are shown with solid lines.
    (Here, $P_1$ is the topmost walk and $P_k$ is shown in the bottom.)
    The vertex sets $V_1,\dots, V_l$ of the strongly connected components are indicated by the dashed lines.
    The red, blue, and green solid paths show the walks $P_i^j$, i.e. the sections of the walks $P_i$ within the 
    strongly connected components of $G$. The dotted arrows indicate the paths $R_i^j$.}
    \end{figure}
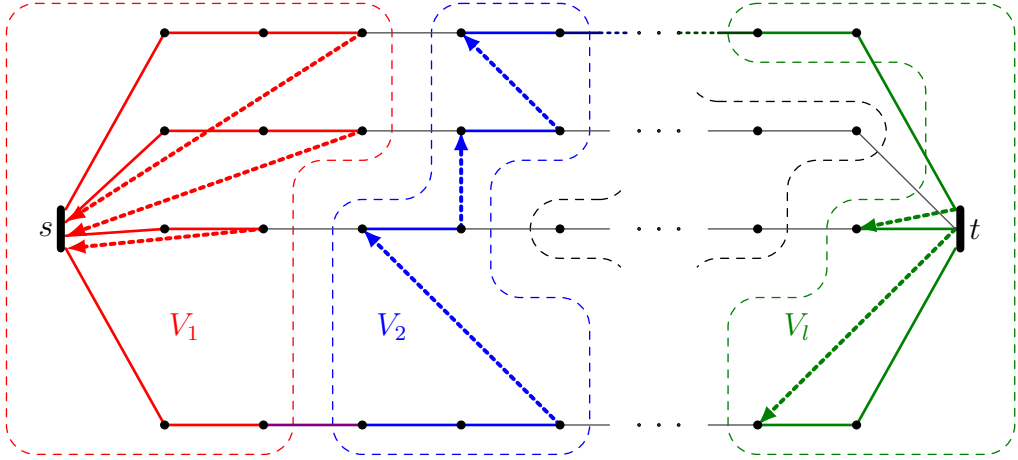
  
  We now costruct our $s$-$t$-walk $P$ that will visit every vertex of $P_1, \ldots, P_k$.
  We start by setting $P=s$ and then add for $j=1, \ldots, l$ all the vertices in $V_j$ to $P$ as follows.
  If $j$ is odd, we append $P_i^j$ and $R_i^j$ for $i=1$ to $i=k-1$ and at last $P_k^j$. 
  If $j$ is even, we append $P_i^j$ and $R_i^j$ for $i=k$ to $i=2$ and at last $P_1^j$.
  Note that when moving from one connected component $V_i$ to the next component $V_{i+1}$, we use an edge 
  from either $P_1$ (if $i$ is even) or $P_k$ (if $i$ is odd).
  Then $P$ is, indeed, an $s$-$t$ walk in $G$ and contains every vertex of $P_1, \ldots, P_k$. 
  We now bound the cost of the walk $P$.
  For every edge $e=(v,w)$ of $P$ we have by complementary slackness
  \begin{equation*}
   c(e) \ = \ a_w - a_v + \sum_{U:e\in\delta(U)}y_U.
  \end{equation*} 
  For an $s$-$t$-walk $R$ in $G$ we have 
  \begin{equation}
  \label{eq:costofwalk}
   c(R) \ = \sum_{(v,w)\in E(R)}\left(a_w -a_v +\sum_{U:(v,w)\in\delta(U)}y_U\right)\ = \ a_t - a_s + c^y(R),
  \end{equation}
  where the cost function $c^y$ is defined as $c^y(e):= \sum_{U:e\in\delta(U)}y_U$.
  Hence, to bound the cost of the $s$-$t$-walk $P$, we can bound $c^y(P)$ and then subtract $a_s$ and add $a_t$.

  $P$ is constructed from pieces of $P_1, \ldots, P_k$ and the paths $R_i^j$. 
  Each of the paths $R_i^j$ can only contain vertices of $V_j$. Two paths $R_i^j$ and $R_{i'}^{j'}$, 
  such that $j\neq j'$, can never both enter or both leave the same element of $\text{supp}(y)$:
  otherwise they would contain vertices of the same strongly connected component of $G$ by Proposition \ref{structureoftightsets}. 
  Thus every element of $\text{supp}(y)$ is entered at most $k-1$ times and left at most $k-1$ times on all the paths $R_i^j$ used in the construction of $P$, 
  and the total $c^y$ cost of these paths is at most $(k-1)\sum_{U}2y_U=(k-1)(\lp+a_s-a_t)$. 
  The $c^y$ cost of the edges of $P_1, \ldots, P_k$ is 
  \[ \sum_{i=1}^k c^y(P_i) \ = \ \sum_{i=1}^k\left(c(P_i)-a_t+a_s\right) \ = \ L+k\cdot a_s-k \cdot a_t.\]
  Consequently, we have 
  \begin{align*}
   c(P) &\ =\ a_t - a_s +  c^y(P)  \\
   &\ \leq \ a_t - a_s + L+k\cdot a_s-k \cdot a_t+(k-1) \bigl( \lp+a_s-a_t \bigr) \\
   & \ = \ L+(k-1) \bigl( \lp+2(a_s-a_t) \bigr)
  \end{align*}
  as claimed.
 \end{proof}
 
 \cite{SveTV18} reduced ATSP to so-called laminarly-weighted instances.
 In a laminarly-weighted instance we have $a=0$ (and $(a,y)$ has laminar support).
 For such instances Lemma \ref{lemma:bound_by_path} and Lemma \ref{lemma:constructing_the_path} would immediately 
 imply our main result (even with better constants).
 However, the reduction to laminarly-weighted instances for ATSP does not yield an analogous statement for the path version.
 Instead, we will prove that $a_s-a_t\le \lp$ for some optimum dual LP solution 
 (Section \ref{sect:bound_pot_difference}).
 
 Let us first consider a simpler special case.

 \begin{definition}
  An instance $(G,c,s,t)$ of ATSPP or an instance $(G,c)$ of ATSP is called \emph{node-weighted} 
  if there are nonnegative node weights $(c_v)_{v\in V}$ such that $c(v,w)=c_v+c_w$ for every edge $(v,w)$.
 \end{definition}
 
 Note that node-weighted instances are not necessarily symmetric because it might happen that an edge 
 $(v,w)$ exists, but $(w,v)$ does not exist.
 Since $x(\delta(s)) \ge 1$, $x(\delta(t))\ge 1$ and $x(\delta(v)) \ge 2$ for $v\notin \{s,t\}$ for every LP solution $x$,
 we have $\lp \ge c_s + c_t + \sum_{v\in V\setminus \{s,t\}}2c_v$.

 \begin{theorem}
  Let $\rhoatspNW$ be the integrality ratio for ATSP on node-weighted instances and 
  $\rhopathNW$ be the integrality ratio for ATSPP on node-weighted instances.
  Then 
  \begin{equation*}
   \rhopathNW \ \le \ 2 \rhoatspNW-1.
  \end{equation*}
 \end{theorem}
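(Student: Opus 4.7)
The plan is to run the same two-step argument as for the general theorem, but with $d = 1$ in Lemma \ref{lemma:bound_by_path} and to exploit the node-weighted structure to get a merging cost of $L + (k-1)\lp$ instead of $L + (k-1)(\lp + 2(a_s - a_t))$.

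First I would prove a node-weighted analogue of Lemma \ref{lemma:bound_by_path}. Given $\Iscr = (G, c, s, t)$ with $G$ the support graph of an optimum LP solution $x^*$, add a new vertex $u^*$ with node weight $c_{u^*} := \tfrac{1}{2}(\lp - c_s - c_t)$ and edges $(t, u^*)$, $(u^*, s)$. The lower bound $\lp \ge c_s + c_t + 2 \sum_{v \in V\setminus\{s,t\}} c_v$ just recorded above the theorem guarantees $c_{u^*} \ge 0$, so the augmented ATSP instance $\Iscr'$ is node-weighted. Extending $x^*$ by $x^*_{(t, u^*)} = x^*_{(u^*, s)} = 1$ gives a feasible ATSP LP solution of cost $2\lp$, so $\opt_{\Iscr'} \le 2 \rhoatspNW \lp$. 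Deleting the $k \ge 1$ copies of $(t, u^*)$ and $(u^*, s)$ from an optimal integral ATSP tour leaves $k$ $s$-$t$-walks $P_1, \dots, P_k$ in $G$ of total cost $L \le \opt_{\Iscr'} - k\lp$.

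For the merging step I would adapt Lemma \ref{lemma:constructing_the_path} using the explicit dual feasible solution defined by $a_s := -c_s$, $a_t := c_t$, $a_v := 0$ for $v \in V\setminus\{s,t\}$, $y_{\{v\}} := c_v$ for $v \in V\setminus\{s,t\}$, and $y_U := 0$ otherwise. The identity $c(v,w) = c_v + c_w$ makes every dual edge constraint hold as an equality, and the support of $y$ is the laminar family of singletons. This dual need not be optimal, but the proof of Lemma \ref{lemma:constructing_the_path} still carries through because (i) tightness of every edge constraint of $G$ gives $c(e) = a_w - a_v + \sum_{U:\, e\in\delta(U)} y_U$ for every $e\in E$, (ii) Proposition \ref{structureoftightsets} applied to $U = V$ depends only on $x^*$ and still produces the topologically ordered SCC decomposition $V_1, \dots, V_l$, and (iii) the connecting paths $R_i^j$ can simply be chosen as simple paths inside the strongly connected subgraph $G[V_j]$, so each singleton $\{v\} \in \textnormal{supp}(y)$ is entered at most $k-1$ times by the $R_i^j$.

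Carrying out the same bookkeeping as in Lemma \ref{lemma:constructing_the_path} with $a_s - a_t = -(c_s + c_t)$ and $2\sum_U y_U = 2 \sum_{v \ne s,t} c_v$ yields a single $s$-$t$-walk $P$ that visits every vertex of $P_1, \dots, P_k$ and satisfies $c(P) \le L + (k-1)\bigl(2\sum_{v \ne s,t} c_v - c_s - c_t\bigr) \le L + (k-1)\lp$, where the final inequality uses the lower bound on $\lp$ once more. Combined with the first step, $\opt \le c(P) \le \opt_{\Iscr'} - \lp \le (2\rhoatspNW - 1)\lp$. The main obstacle to anticipate is precisely that the constructed dual is not in general optimal, so Proposition \ref{prop:enterandleaveonlyonce} cannot be applied as a black box; this turns out not to matter because $\textnormal{supp}(y)$ consists of singletons, so the property needed from that proposition degenerates to the trivial existence of simple paths in a strongly connected graph.
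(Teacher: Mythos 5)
Your first step (adding a new vertex of weight $\tfrac12(\lp-c_s-c_t)$ and running Lemma~\ref{lemma:bound_by_path} with $d=1$) matches the paper exactly, and so does the overall plan: it all reduces to showing $c(P)\le L+(k-1)\lp$ in the merging lemma. The gap is in how you try to prove this via the explicit dual $(a,y)$ with $a_s=-c_s$, $a_t=c_t$, $a_v=0$, $y_{\{v\}}=c_v$. Your claim~(i), that every edge constraint of $G$ is tight for this dual, is false: for an edge $(w,s)$ into $s$ the constraint reads $a_s-a_w+y_{\{w\}}=c_w-c_s\le c_w+c_s=c(w,s)$, which has slack $2c_s$, and symmetrically an edge $(t,w)$ out of $t$ has slack $2c_t$. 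Such edges do occur in the support graph of an optimum primal solution in general (the LP may route a circulation through $s$ or $t$ to satisfy cut constraints on sets near $s$ or $t$), and more to the point, the walk $P$ you construct uses them: every path $R_i^1$ ends at $s$ (the first vertex of $P_{i+1}^1$), so $P$ enters $s$ on $k-1$ edges into $s$, and likewise leaves $t$ on $k-1$ edges out of $t$. Once tightness fails on edges of $P$, the identity $c(P)=a_t-a_s+c^y(P)$ turns into $c(P)\ge a_t-a_s+c^y(P)$, which is the \emph{wrong} direction for an upper bound, so the bookkeeping in Lemma~\ref{lemma:constructing_the_path} does not carry over. Indeed, your intermediate claim $c(P)\le L+(k-1)\bigl(2\sum_{v\ne s,t}c_v-c_s-c_t\bigr)$ cannot be right in general: it is strictly tighter than $L+(k-1)\bigl(c_s+c_t+2\sum_{v\ne s,t}c_v\bigr)$, and the missing $2(k-1)(c_s+c_t)$ is precisely the total slack on the $\ge 2(k-1)$ edges into $s$ and out of $t$ that $P$ must use.

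The argument can be repaired, but the cleanest fix is to drop the dual entirely in the merging step, which is what the paper does. Since $c$ is node-weighted, one can bound $\sum_{i,j}c(R_i^j)$ directly: each vertex $v$ lies in a unique strongly connected component $V_j$, hence occurs in at most $k-1$ of the paths $R_i^j$, and each occurrence contributes at most $2c_v$ to the cost; moreover $s$ and $t$ occur only as endpoints of the $R_i^j$, contributing at most $c_s$ resp.\ $c_t$ per occurrence. This gives
\[
\sum_{i,j}c(R_i^j)\ \le\ (k-1)\Bigl(c_s+c_t+\sum_{v\in V\setminus\{s,t\}}2c_v\Bigr)\ \le\ (k-1)\lp,
\]
and hence $c(P)\le L+(k-1)\lp$ without any appeal to complementary slackness. (If you insist on the dual route, you would have to add back the slack terms $2c_s$ and $2c_t$ for the edges of $P$ entering $s$ and leaving $t$; after doing so you recover exactly the bound above, not the tighter one you claimed.)
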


 \begin{proof}
  First we show how to modify the proof of Lemma \ref{lemma:bound_by_path} for node-weighted instances and $d=1$.
  For a node-weighted instance $\Iscr=(G,c)$, let $\Iscr'=(G',c')$ result from $\Iscr$ by adding a vertex 
  $v$ with weight $c_v=\frac{1}{2}(\lp-c_s-c_t)$ and two edges $(t,v)$ and $(v,s)$.
  Note that $\lp\ge c_s+c_t$ and hence $c_v\ge 0$. 
  Then continuing with the node-weighted instance $\Iscr'$ as in the proof of Lemma \ref{lemma:bound_by_path} yields the following:
  It suffices to show that for node-weighted instances of ATSPP,
  we can get an $s$-$t$-walk $P$ as in Lemma~\ref{lemma:constructing_the_path}, but with $c(P)\le L+(k-1)\lp$.
  
  We construct $P$ as in the proof of Lemma \ref{lemma:constructing_the_path}.
  Again we first bound the cost of the paths $R_i^j$. 
  For $1\le j\le l$ each vertex in $V_j$ can only be contained in the paths $R_i^j$
  (and not in a path $R_i^{j'}$ for $j' \ne j$).
  Hence every vertex can be used in at most $(k-1)$ paths $R_i^j$. 
  Furthermore, the vertices $s$ and $t$ are only used as the last or first vertices of paths $R_i^j$. 
  Hence the total cost of all paths $R_i^j$ can be bounded from above by 
  $$(k-1)\left(c_s+c_t +\sum_{v\in V\backslash \{s,t\}}2c_v\right) \ \le \ (k-1)\lp.$$ 
  This shows $c(P)\le L+(k-1)\lp$ and completes the proof.
 \end{proof}

\section{Bounding the difference of \boldmath $a_s$ and $a_t$}
\label{sect:bound_pot_difference}
  
  The goal of this section is to bound the difference of the dual variables $a_s$ and $a_t$ by $\lp$.
  Using Lemma \ref{lemma:bound_by_path} and Lemma \ref{lemma:constructing_the_path}, this will imply our main result $\rhopath\le 4\rhoatsp-3$.
  
  First, we give an equivalent characterization of the minimum value of $a_s-a_t$ in any optimum dual solution.
  This will not be needed to prove our main result, but might help to get some intuition.
 
  \begin{lemma}\label{lemma:characterize_a_s}
  Let $\Iscr = (G,c,s,t)$ be an instance of ATSPP and let $\Delta \ge 0$.
  Now consider the instance $\Iscr' = (G+e',c,s,t)$, where we add an edge $e'=(t,s)$ with $c(e'):= \Delta$.
  Then $\lp_{\Iscr} \ge \lp_{\Iscr'}$. 
  Moreover, $\lp_{\Iscr} = \lp_{\Iscr'}$ if and only if there exists an optimum solution $(a,y)$ 
  of \eqref{eq:dual_subtour_lp_path} for the instance $\Iscr$ with $a_s - a_t \le \Delta$.
 \end{lemma}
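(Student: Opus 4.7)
The plan is to compare the two LPs side by side, exploiting the fact that they differ only by the single edge $e' = (t, s)$ and that the dual $y$-variables are indexed over subsets of $V \setminus \{s, t\}$.

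For the inequality $\lp_{\Iscr} \geq \lp_{\Iscr'}$, I would take any feasible primal solution $x$ for $\Iscr$ and extend it to $\Iscr'$ by setting $x_{e'} := 0$. All the constraints of \eqref{eq:subtour_lp_path} continue to hold (degree constraints at $s$ and $t$ are preserved, no cut is affected), and the cost is unchanged, so every feasible solution for $\Iscr$ gives a feasible solution for $\Iscr'$ of equal cost.

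For the equivalence, the key observation is that the dual \eqref{eq:dual_subtour_lp_path} for $\Iscr'$ has exactly the same variables and objective function as that for $\Iscr$, together with a single additional constraint, namely the edge inequality for $e' = (t, s)$:
\[
a_s - a_t + \sum_{U: e' \in \delta(U)} y_U \ \leq \ c(e') \ = \ \Delta.
\]
Since every $U$ indexing a $y$-variable satisfies $U \subseteq V \setminus \{s, t\}$, so that $s, t \notin U$, we have $e' = (t, s) \notin \delta(U)$. The sum therefore vanishes and the new constraint simplifies to $a_s - a_t \leq \Delta$. From here LP duality finishes both directions. For $(\Leftarrow)$, an optimum dual $(a, y)$ for $\Iscr$ with $a_s - a_t \leq \Delta$ is automatically feasible for the dual of $\Iscr'$, with objective $a_t - a_s + 2 \sum_U y_U = \lp_{\Iscr}$, so weak duality gives $\lp_{\Iscr'} \geq \lp_{\Iscr}$; combined with the first part of the lemma this yields equality. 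For $(\Rightarrow)$, assuming $\lp_{\Iscr} = \lp_{\Iscr'}$, any optimum dual solution $(a, y)$ of $\Iscr'$ is also feasible for the dual of $\Iscr$ (which has fewer edge constraints) with the same objective, hence optimum for $\Iscr$ as well, and its feasibility for the extra edge constraint delivers $a_s - a_t \leq \Delta$.

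I do not expect a real obstacle: once the structural observation $e' \notin \delta(U)$ is made, the result is essentially immediate from weak LP duality. The only point deserving a brief check is that the index set of the $y$-variables does not grow when passing from $\Iscr$ to $\Iscr'$; this is automatic from the definition of \eqref{eq:subtour_lp_path}, since we only add an edge without changing $V$ or the roles of $s$ and $t$.
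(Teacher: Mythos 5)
Your proposal is correct and follows essentially the same route as the paper: extend the primal with $x_{e'}=0$ for the easy inequality, observe that the only new dual constraint collapses to $a_s - a_t \le \Delta$ because $e' \notin \delta(U)$ for all $U \subseteq V\setminus\{s,t\}$, and then transfer optimality between the two duals in each direction. No gaps.
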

 \prove
 Every feasible solution $x$ of \eqref{eq:subtour_lp_path} for $\Iscr$ can be extended to a feasible 
 solution of \eqref{eq:subtour_lp_path} for $\Iscr'$ by setting $x_{e'}:=0$. This shows $\lp_{\Iscr} \ge \lp_{\Iscr'}$. 
 
 The dual LPs for the two instances are identical, except for the constraint corresponding to $e'$, which is
 \begin{equation}\label{eq:dual_costraint_e'}
  \Delta \ = \ c(e') \ \ge \ a_s - a_t + \sum_{\emptyset\ne U\subseteq V\setminus\{s,t\}, e'\in\delta(U)} y_U \ = \ a_s -a_t.
 \end{equation}

 Suppose $\lp_{\Iscr} = \lp_{\Iscr'}$.
 Let $(a,y)$ be an optimum dual solution for $\Iscr'$. 
 Then, \eqref{eq:dual_costraint_e'} is satisfied and $(a,y)$ is also feasible for the dual LP for the instance $\Iscr$.
 Moreover, since $\lp_{\Iscr} = \lp_{\Iscr'}$, the dual solution $(a,y)$ is also optimum for the instance $\Iscr$.
 
 For the reverse direction, let $(a,y)$ be an optimum solution
 to \eqref{eq:dual_subtour_lp_path} for the instance $\Iscr$ with $a_s - a_t \le \Delta$.
 Then $(a,y)$ satisfies \eqref{eq:dual_costraint_e'} and thus is also feasible for \eqref{eq:dual_subtour_lp_path} for $\Iscr'$.
 Hence, $\lp_{\Iscr'} \ge \lp_{\Iscr}$.
 \endproof
 
  We will need the following variant of Menger's Theorem.
  \begin{lemma}\label{lemma:menger_variant}
  Let $G$ be a directed graph and $s,t\in V(G)$ such that $t$ is reachable from $s$ in $G$. 
  Let $U\subseteq V(G) \setminus \{s,t\}$ such that 
  for every vertex $u\in U$, there exists an $s$-$t$-path in $G- u$.
  Then there exist two $s$-$t$-paths $P_1$ and $P_2$ in $G$ such that no vertex $u\in U$ is contained in
  both $P_1$ and $P_2$.
  \end{lemma}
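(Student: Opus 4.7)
The statement is the node-capacitated version of Menger's theorem in which every vertex of $U$ has capacity one and every other vertex has capacity infinity. I would prove it by the standard reduction to the edge-disjoint version. Construct a digraph $G'$ by splitting each $u\in U$ into two vertices $u^-$ and $u^+$ joined by a new arc $e_u=(u^-,u^+)$; redirect every arc of $G$ entering $u$ so that it now enters $u^-$, and every arc leaving $u$ so that it now leaves $u^+$. Vertices outside $U$ (in particular $s$ and $t$) and arcs between them are copied verbatim. Any $s$-$t$-path in $G$ lifts canonically to an $s$-$t$-path in $G'$ that traverses $e_u$ whenever the original path visits $u$, and every $s$-$t$-path in $G'$ projects back to an $s$-$t$-walk in $G$, which can be shortened to a path. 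Under this correspondence, two $s$-$t$-paths in $G$ share no vertex in $U$ if and only if their lifts in $G'$ are edge-disjoint on $\{e_u:u\in U\}$.

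So the goal reduces to exhibiting two $s$-$t$-paths in $G'$ that share none of the arcs $e_u$. Assign capacity $1$ to every $e_u$ and capacity $2$ to every other arc of $G'$, and invoke max-flow/min-cut. By integral flow decomposition any $s$-$t$-flow of value $2$ splits into two $s$-$t$-flow walks which, after removing any embedded cycles, yield the two simple paths required. Thus it suffices to prove that every $s$-$t$-edge-cut in $G'$ has capacity at least $2$.

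Since $t$ is reachable from $s$ in $G$ (hence in $G'$), no cut has capacity $0$. A cut of capacity exactly $1$ would have to consist of a single arc, and that arc must be some $e_u$ because all other arcs carry capacity $2$; but deleting $e_u$ from $G'$ is equivalent to deleting $u$ from $G$, so such a cut would contradict the hypothesis that $G-u$ contains an $s$-$t$-path. Hence the minimum cut has capacity at least $2$, finishing the argument. The only points requiring a little care are that the two flow paths can genuinely be chosen to be simple paths in $G'$ (standard flow decomposition plus shortening), and that the correspondence between single-edge cuts $\{e_u\}$ in $G'$ and vertex deletions of $u$ in $G$ is exact; both are immediate from the construction, so I expect no real obstacle beyond bookkeeping.
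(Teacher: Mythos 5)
Your proof is correct and uses essentially the same approach as the paper: split each $u\in U$ into $u^-,u^+$ with a unit-capacity arc $e_u$, observe the hypotheses force every $s$-$t$-cut to have capacity at least $2$, take an integral flow of value $2$, decompose into two paths, and project back. The only cosmetic difference is that you assign capacity $2$ (rather than infinity) to the non-split arcs, which changes nothing.
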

  \prove 
  We construct a graph $G'$ that arises from $G$ as follows.
  We split every vertex $u\in U$ into two vertices $u^-$ and $u^+$ that are connected by an edge 
  $e_u := (u^-, u^+)$.
  Every edge $(v,u)$ is replaced by an edge $(v,u^-)$ and every edge $(u,v)$
  is replaced by an edge $(u^+,v)$. 
  In the graph $G'$ we now define integral edge capacities. 
  Every edge  $e_u$ for $u\in U$ has capacity one. 
  All other edges, i.e. all edges corresponding to edges of $G$, have capacity infinity.
  
  Since for every vertex $u\in U$, there exists an $s$-$t$-path in $G-u$,
  for every $u\in U$ there exists an $s$-$t$-path in $G' - e_u$.
  Thus, the minimum capacity of an $s$-$t$-cut in $G'$ is at least two.
  Hence, there exists an integral $s$-$t$-flow of value two in $G'$ with the defined edge capacities.
  This flow can be decomposed into two $s$-$t$-paths $P_1'$ and $P_2'$. 
  By the choice of the edge capacities no edge  $e_u$ for $u\in U$ occurs in both paths.
  Since this edge $e_u$ is the only outgoing edge of $u^-$ and the only incoming edge of $u^+$,
  an $s$-$t$-path using $u^-$ or $u^+$ must use $e_u$, and at most one of $P_1'$ and $P_2'$ can do so.
  
  Hence, contracting the edges $e_u$ (for $u\in U$) yields two $s$-$t$-paths $P_1$ and $P_2$ in $G$  
  such that no vertex $u\in U$ is contained in  both $P_1$ and $P_2$.
  \endproof
  
 We will now work with an optimum dual solution $(a,y)$ with $a_s-a_t$ minimum. 
 Note that this minimum is attained because for every feasible dual solution $(a,y)$ we have $a_s-a_t\ge -\lp$.
  
  \begin{lemma}\label{lemma:claim_1}
   Let $(G,c,s,t)$ be an instance of ATSPP, where $G$ is the support graph of an optimum solution to \eqref{eq:subtour_lp_path}.
 Let $(a,y)$ be an optimum solution of \eqref{eq:dual_subtour_lp_path} such that $a_s - a_t$ is minimum. 
 Let $\bar U\subseteq V\setminus\{s,t\}$ such that every $s$-$t$-path in $G$ enters (and leaves) $\bar U$ at least once.   
   Then $y_{\bar U}=0$.
  \end{lemma}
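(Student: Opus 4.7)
The plan is to suppose for contradiction that $y_{\bar U}>0$ and construct another optimum dual solution $(a',y')$ with $a'_s-a'_t<a_s-a_t$, contradicting the choice of $(a,y)$. The perturbation will decrease $y_{\bar U}$ by a small amount $\gamma>0$ and simultaneously raise the potentials $a_v$ in a piecewise-constant way across a three-layer decomposition of $G$ forced by the separator hypothesis on $\bar U$.

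To obtain this decomposition I would first apply complementary slackness: $y_{\bar U}>0$ implies $x^*(\delta(\bar U))=2$, and flow conservation on $\bar U$ (which contains neither $s$ nor $t$) gives $x^*(\delta^-(\bar U))=x^*(\delta^+(\bar U))=1$. Let $A$ be the set of vertices reachable from $s$ in $G-\bar U$. By hypothesis $t\notin A$, and every edge of $G$ leaving $A$ must enter $\bar U$ (else its head would be reachable in $G-\bar U$), so $\delta^+(A)\subseteq\delta^-(\bar U)$ in $G$. Summing the primal flow-conservation equalities over $v\in A$ yields $x^*(\delta^+(A))-x^*(\delta^-(A))=1$; combined with $x^*(\delta^+(A))\le x^*(\delta^-(\bar U))=1$ this forces $x^*(\delta^-(A))=0$ and $x^*(\delta^+(A))=1$. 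Since $G$ is the support of $x^*$, this means $G$ has no edge entering $A$ at all, and $\delta^+(A)=\delta^-(\bar U)$ as subsets of $E(G)$. Writing $B':=V\setminus(A\cup\bar U)$, so that $s\in A$ and $t\in B'$, these two facts also rule out edges of $G$ from $B'$ into $A$ (they would lie in $\delta^-(A)$) or from $B'$ into $\bar U$ (they would lie in $\delta^-(\bar U)=\delta^+(A)$, forcing the tail into $A$), as well as edges of $G$ from $A$ directly into $B'$. So every edge of $G$ is of one of five types: internal to $A$, from $A$ to $\bar U$, internal to $\bar U$, from $\bar U$ to $B'$, or internal to $B'$.

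For any $\gamma\in(0,y_{\bar U}]$, define $a'_v:=a_v$ for $v\in A$, $a'_v:=a_v+\gamma$ for $v\in\bar U$, $a'_v:=a_v+2\gamma$ for $v\in B'$, and $y'_{\bar U}:=y_{\bar U}-\gamma$, leaving the other $y_U$ unchanged. On each of the three internal edge types, $(a'_w-a'_v)-(a_w-a_v)=0$ and the edge is not in $\delta(\bar U)$, so the dual LHS is unchanged; on each of the two crossing types, $(a'_w-a'_v)-(a_w-a_v)=\gamma$ while the drop in $y_{\bar U}$ contributes $-\gamma$, so again the dual LHS is unchanged. Thus $(a',y')$ is dual feasible. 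Its objective differs from that of $(a,y)$ by $(a'_t-a'_s)-(a_t-a_s)+2(y'_{\bar U}-y_{\bar U})=2\gamma-2\gamma=0$, so it is also optimum; but $a'_s-a'_t=a_s-a_t-2\gamma<a_s-a_t$, yielding the desired contradiction.

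The main obstacle is the structural step in the middle paragraph: combining complementary slackness for $\bar U$ with the primal flow identity across $A$ to upgrade the obvious containment $\delta^+(A)\subseteq\delta^-(\bar U)$ to an equality in the support, and thereby to kill all edges of $G$ entering $A$ or crossing in the ``wrong'' direction. Once that clean three-layer picture is available, the potential shift $(0,\gamma,2\gamma)$ is essentially forced by the requirement that the objective stay constant (so $\phi_t-\phi_s=2\gamma$) and that the $-\gamma$ loss of $y_{\bar U}$ on every edge of $\delta(\bar U)$ be cancelled by $\phi_w-\phi_v=\gamma$.
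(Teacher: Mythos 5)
Your proof is correct and follows essentially the same route as the paper: let $A$ (the paper's $R$) be the set of vertices reachable from $s$ in $G-\bar U$, note $t\notin A$ by hypothesis, and shift the dual potentials in three levels across $A$, $\bar U$, $V\setminus(A\cup\bar U)$ while decreasing $y_{\bar U}$. The one place you do noticeably more work than the paper is the middle paragraph: you invoke complementary slackness and primal flow conservation across $A$ to establish the clean three-layer edge structure (no edges entering $A$, none from $V\setminus(A\cup\bar U)$ into $\bar U$, etc.). This is valid, but most of it is not needed. The only edge type that would actually break dual feasibility under your potential shift is an edge from $A$ directly into $V\setminus(A\cup\bar U)$, and that is ruled out already by the definition of $A$ (if $(v,w)\in E$ with $v\in A$ and $w\notin\bar U$, then $w\in A$). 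Edges in the other ``wrong'' directions ($\bar U\to A$, $B'\to A$, $B'\to\bar U$) only make the dual-constraint left-hand side drop under the shift, so they need not be excluded -- the paper simply checks these cases directly and observes the change is $\le 0$. So your argument buys a slightly tidier picture at the cost of an unnecessary detour through the primal.
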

  \prove
  Suppose $y_{\bar U} >0$ and let $\epsilon :=y_{\bar U}$.
 Let $R$ be the set of vertices reachable from $s$ in $G -\bar U$.
 We define a dual solution $(\bar a, \bar y)$ as follows:
 \begin{align*}
  \bar y (U) :=&\ \begin{cases}
                   y_U - \epsilon &\text{ if }U = \bar U \\
                   y_U &\text{ else}
                  \end{cases} \\
  \bar a_v :=&\ \begin{cases}
                   a_v - 2\epsilon &\text{ if }v\in R \\
                   a_v - \epsilon &\text{ if }v \in \bar U \\
                   a_v  &\text{ else}.
                  \end{cases} 
 \end{align*}
 See Figure \ref{fig:modifying_dual_solution}.
 \begin{figure}
 \begin{center}
  \begin{tikzpicture}
    \tikzstyle{vertex}=[circle,fill,minimum size=3,inner sep=0pt] 
    \tikzstyle{edge}=[->, >=latex, thick]
    \fill[darkgreen, opacity=0.3] (2,0) ellipse ({2.2} and {1.2});
    \fill[blue, opacity=0.2] (5.5,0) ellipse ({0.8} and {1.2});
    \draw[red, very thick] (5.5,0) ellipse ({0.8} and {1.2});
    \fill[gray, opacity=0.3] (9,0) ellipse ({2.2} and {1.2});
    \node[darkgreen] at (2,1.5) {$R$};
    \node[blue] at (5.5,1.5) {$\bar U$};
    \node[gray] at (9,1.5) {$V\setminus(R \cup \bar U)$};
    \node[red] at (6.4, -0.9) {$-\epsilon$};
    \node[darkgreen] at (2,-1.5) {$-2\epsilon$};
    \node[blue] at (5.5,-1.5) {$-\epsilon$};
    \node[vertex](s) at (0.5,0){};
    \node[left] at (s) {$s$};
    \node[vertex](t) at (10.5,0){};
    \node[right] at (t) {$t$};
  \end{tikzpicture}
  \end{center}
  \caption{Modifying the dual solution in the proof of Lemma \ref{lemma:claim_1}. 
  The green and blue numbers in the bottom indicate the 
  change of the dual node variables. In red the change of the variable $y_{\bar U}$ is indicated. 
  There is no edge from $R$ to $V\setminus (R \cup \bar U)$.
  \label{fig:modifying_dual_solution}
  }
 \end{figure}
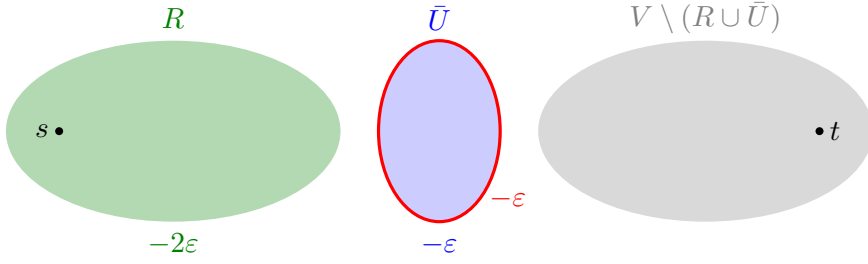
 We claim that $(\bar a, \bar y)$ is an optimum (and feasible) solution to \eqref{eq:dual_subtour_lp_path}. 
 Note that $t\in V \setminus \left( R \cup \bar U\right)$ and thus $\bar a_t = a_t$.
 Since $s\in R$, we have $\bar a_s - \bar a_t < a_s - a_t$.
 Thus, if $(\bar a, \bar y)$ is indeed optimum (and feasible), 
 we obtain a contradiction to our choice of the dual solution $(a,y)$.

 First, we observe that  $(\bar a, \bar y)$ and $(a,y)$ have the same objective value since
 \[ \bar a_t- \bar a_s +  \sum_{\emptyset \ne U \subseteq V\setminus \{s,t\}} 2\bar y_U \ =\
    a_t - \left(a_s - 2 \epsilon\right)  +\sum_{\emptyset \ne U \subseteq V\setminus \{s,t\}} 2 y_U - 2 \epsilon .
 \] 
 By our choice of $\epsilon$, the vector $\bar y$ will be non-negative. Now consider an edge $e=(v,w)\in E(G)$.
 We need to show that 
 \begin{equation}\label{eq:dual_constraint}
  \bar a_w - \bar a_v + \sum_{U : e\in \delta(U)} \bar y_U\ \le\ c(e).
 \end{equation}
 To prove this we will show that 
 \begin{equation}\label{eq:sufficient_for_dual_feasible}
   \bar a_w - a_w - \bar a_v + a_v + \sum_{U : e\in \delta(U)} \left(\bar y_U - y_U\right) \le 0.
 \end{equation}
 Since $(a,y)$ is a feasible dual solution, this will imply \eqref{eq:dual_constraint}.
 We have 
  \begin{align*}
  \bar a_w - a_w :=&\ \begin{cases}
                   - 2\epsilon &\text{ if }w\in R \\
                    - \epsilon &\text{ if }w \in \bar U \\
                   0  &\text{ else,}
                  \end{cases} \\
    - \bar a_v + \bar a_v :=&\ \begin{cases}
                   2\epsilon &\text{ if }v\in R \\
                     \epsilon &\text{ if }v \in \bar U \\
                   0  &\text{ else,}
                  \end{cases}                \\
     \sum_{U : e\in \delta(U)} \left(\bar y_U - y_U\right) :=&\ \begin{cases}
                    - \epsilon &\text{ if }(v,w)\in \delta(\bar U) \\
                   0 &\text{ else}.
                  \end{cases}             
 \end{align*}
 Since $\bar a_w - a_w\le 0$ and $\bar  \sum_{U : e\in \delta(U)} \left(\bar y_U - y_U\right) \le 0$,
 it suffices to consider the cases $v\in R$ and $v\in \bar U$.
 If $v\in R$, we have by definition of $R$, either $w\in R$ or $w\in \bar U$.
 In both cases \eqref{eq:dual_constraint} holds, because if $w\in \bar U$, we have $(v,w)\in \delta(\bar U)$.
 Now let $v\in \bar U$. Then if $(v,w)\in \delta(\bar U)$, we have $ \sum_{U : e\in \delta(U)} \left(\bar y_U - y_U\right) = -\epsilon$,
 implying \eqref{eq:sufficient_for_dual_feasible}.
 Otherwise, $w\in \bar U$ and $\bar a_w - a_w- \bar a_v + a_v = 0$.
 
 This shows that $(\bar a, \bar y)$ is an optimum dual solution and $\bar a_s - \bar a_t < a_s - a_t$, a contradiction. 
 Hence, $y_{\bar U}=0$.
  \endproof

   We will now continue to work with a dual solution $(a,y)$ that minimizes $a_s -a_t$.
   By Proposition \ref{duallaminar}, we can assume in addition that $(a,y)$ has laminar support.
  
 \begin{lemma}\label{lemma:claim_2}
  Let $(G,c,s,t)$ be an instance of ATSPP, where $G$ is the support graph of an optimum solution to \eqref{eq:subtour_lp_path}.
 Let $(a,y)$ be an optimum solution to \eqref{eq:dual_subtour_lp_path} that has laminar support and minimum $a_s - a_t$. 
   
 Then $G$ contains two $s$-$t$-paths $P_1$ and $P_2$ 
 such that for every set $U\in \text{supp}(y)$ we have $|E(P_1) \cap \delta(U)| + |E(P_2) \cap \delta(U)| \le 2$.
   \end{lemma}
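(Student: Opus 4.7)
The plan is to reduce the claim to a Menger-type argument on a suitably contracted graph. Since $(a,y)$ has laminar support, the maximal elements of $\text{supp}(y)$ form a pairwise disjoint family $\Mscr$. Form $G'$ from $G$ by contracting each $U\in\Mscr$ to a single super-vertex $u_U$, and set $W:=\{u_U : U\in\Mscr\}\subseteq V(G')\setminus\{s,t\}$.

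To invoke Lemma \ref{lemma:menger_variant} on $G'$ with the set $W$, note that $t$ is reachable from $s$ in $G$ (hence in $G'$) because $G$ is the support of a feasible LP solution. Moreover, each $U\in\Mscr$ has $y_U>0$, so the contrapositive of Lemma \ref{lemma:claim_1} (which crucially exploits the minimality of $a_s-a_t$) yields an $s$-$t$-path in $G$ that avoids $U$ entirely; projecting to $G'$ gives an $s$-$t$-path in $G'-u_U$. Lemma \ref{lemma:menger_variant} then produces two $s$-$t$-paths $P_1',P_2'$ in $G'$ such that no vertex of $W$ lies on both; setting $\Mscr_i:=\{U\in\Mscr : u_U\in V(P_i')\}$, we have $\Mscr_1\cap\Mscr_2=\emptyset$.

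I now lift each $P_i'$ back to an $s$-$t$-walk in $G$. Whenever $P_i'$ traverses $u_U$ for $U\in\Mscr_i$, replace $u_U$ by a path in $G[U]$ from the specific entry vertex to the exit vertex; by Proposition \ref{structureoftightsets} the entry vertex lies in the first SCC $U_1$ and the exit vertex in the last SCC $U_l$, and traversing the chain $U_1\to U_2\to\cdots\to U_l$ provides such a path. Then I apply the shortening procedure of Proposition \ref{prop:enterandleaveonlyonce} to the lifted walk: whenever it enters or leaves some $U\in\text{supp}(y)$ more than once, reroute the corresponding subpath through $G[U]$. Any offending $U$ must be contained in some $U^*\in\Mscr_i$ (otherwise the walk would not meet $U$ at all), so each rerouting stays inside $U^*$ and does not alter $\Mscr_i$. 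The result is an $s$-$t$-path $P_i$ in $G$ that enters and leaves every $U\in\text{supp}(y)$ at most once and meets some $U^*\in\Mscr$ only when $U^*\in\Mscr_i$.

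To verify the claim, fix $U\in\text{supp}(y)$ and let $U^*\in\Mscr$ be the unique maximal set containing it. Since $\Mscr_1\cap\Mscr_2=\emptyset$, at most one of $P_1,P_2$ meets $U^*$; say $P_2$ does not. Then $P_2$ touches no vertex of $U^*\supseteq U$, giving $|E(P_2)\cap\delta(U)|=0$, while $|E(P_1)\cap\delta(U)|\le 2$ by the single-entry property. The main obstacle is arranging that the coarse-scale Menger disjointness on $\Mscr$ coexists with the fine-grained single-entry property on all of $\text{supp}(y)$; handling the coarse scale by contraction (plus Lemma \ref{lemma:claim_1}) and the fine scale via Proposition \ref{prop:enterandleaveonlyonce} separates these two concerns cleanly.
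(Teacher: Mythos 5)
Your proof is correct and follows essentially the same strategy as the paper: contract the maximal sets of $\text{supp}(y)$, use Lemma \ref{lemma:claim_1} to satisfy the hypothesis of Lemma \ref{lemma:menger_variant}, obtain two $s$-$t$-paths in the contracted graph avoiding common contracted vertices, and lift them back using Proposition \ref{structureoftightsets} and Proposition \ref{prop:enterandleaveonlyonce}. The only cosmetic difference is that you lift first with an arbitrary internal path and then apply the shortening procedure, whereas the paper directly chooses the internal $v$-$w$-paths via Proposition \ref{prop:enterandleaveonlyonce}; both yield the same single-crossing property.
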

  \prove
 By Lemma \ref{lemma:claim_1}, for every set $U\in \text{supp}(y)$ there is an $s$-$t$-path in $G$ that visits no vertex in $U$.
 We contract all maximal sets $U\in \text{supp}(y)$. 
 Using Lemma \ref{lemma:menger_variant},
 we can find two $s$-$t$-paths in $G$ such that each vertex arising from the contraction of a set 
 $U\in \text{supp}(y)$ is visited by at most one of the two paths.
 
 Now we revert the contraction of the sets $U\in \text{supp}(y)$. 
 We complete the edge sets of the two $s$-$t$-paths
 we found before (which are not necesarily connected anymore after undoing the contraction), to
 paths $P_1$ and $P_2$ with the desired properties.
 To see that this is possible, let $v$ be the end vertex of an edge entering a contracted set $U\in \text{supp}(y)$ 
 and let $w$ be the start vertex of an edge leaving $U$.
 Then by Proposition \ref{structureoftightsets}, the vertex $w$ is reachable from $v$ in $G[U]$ and by 
 Proposition \ref{prop:enterandleaveonlyonce}, we can choose a $v$-$w$-path in $G[U]$ that enters and leaves every set 
 $U'\in \text{supp}(y)$ with $U' \subsetneq U$ at most once.
\endproof

  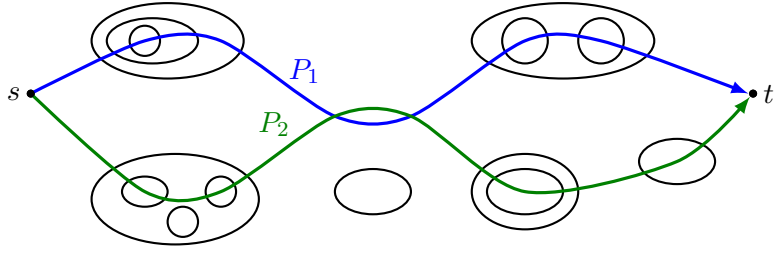
\begin{figure}
  \begin{center}
   \begin{tikzpicture}
        \tikzstyle{vertex}=[circle,fill,minimum size=3,inner sep=0pt] 
    \tikzstyle{edge}=[->, >=latex, thick]
    \node[vertex](s) at (0.5,0.3){};
    \node[left] at (s) {$s$};
    \node[vertex](t) at (10,0.3){};
    \node[right] at (t) {$t$};
    
    \draw[black, thick] (2,1) ellipse ({0.2} and {0.2});
    \draw[black, thick] (2.1,1) ellipse ({0.6} and {0.3});
    \draw[black, thick] (2.3,1) ellipse ({1} and {0.5});
       
     \draw[black, thick] (7,1) ellipse ({0.3} and {0.3});
     \draw[black, thick] (8,1) ellipse ({0.3} and {0.3});
     \draw[black, thick] (7.5,1) ellipse ({1.2} and {0.5});
     
      \draw[black, thick] (2,-1) ellipse ({0.3} and {0.2});
     \draw[black, thick] (3,-1) ellipse ({0.2} and {0.2});
     \draw[black, thick] (2.5,-1.4) ellipse ({0.2} and {0.2});
    \draw[black, thick] (2.4,-1.1) ellipse ({1.1} and {0.6});
    
      \draw[black, thick] (5,-1) ellipse ({0.5} and {0.3});
      \draw[black, thick] (7,-1) ellipse ({0.5} and {0.3});
      \draw[black, thick] (7,-1) ellipse ({0.7} and {0.5});
      \draw[black, thick] (9,-0.6) ellipse ({0.5} and {0.3});
      
      \node[blue] at (4.1,0.6) {$P_1$};
      \node[darkgreen] at (3.7,-0.1) {$P_2$};
     
     \draw [->, >=latex, blue, very thick] plot [smooth] coordinates {(s) (2,1) (3,1) (4.5,0) (5.5,0) (7,1) (8,1) (9.94,0.3)};  
     \draw [->, >=latex, darkgreen,very thick] plot [smooth] coordinates {(s) (2,-1) (3,-1) (4.5,0) (5.5,0) (7,-1) (9,-0.6) (9.96,0.26)};  
     \node[vertex](s) at (0.5,0.3){};
     \node[vertex](t) at (10,0.3){};
   \end{tikzpicture}
   \end{center}
  \caption{The paths $P_1$ and $P_2$ as in Lemma \ref{lemma:claim_2}.
  In black the vertex sets $U\in \text{supp}(y)$ are shown. 
  The paths $P_1$ and $P_2$ are not necesarily disjoint but they never both cross the same set $U$ with $y_U>0$. \label{fig:paths_p1_and_p_2}}
  \end{figure}

We finally show our main lemma.

\begin{lemma}\label{lemma:bound_pot_difference}
 Let $\Iscr=(G,c,s,t)$ be an instance of ATSPP, where $G$ is the support graph of an optimum solution to \eqref{eq:subtour_lp_path}.
 Then there is an optimum solution $(a,y)$ of \eqref{eq:dual_subtour_lp_path} 
 with laminar support and $a_s -a_t \le \lp$.
\end{lemma}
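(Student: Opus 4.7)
The plan is to pick an optimum dual solution with laminar support that minimizes $a_s - a_t$, apply Lemma \ref{lemma:claim_2} to obtain the two $s$-$t$-paths $P_1, P_2$ with bounded crossings of $\text{supp}(y)$, and then compare $c(P_1)+c(P_2)$ expressed via complementary slackness (equation \eqref{eq:costofwalk}) with the trivial lower bound $0$ from nonnegativity of edge costs. If the inequality $a_s - a_t \le \lp$ ever failed, the calculation would produce two paths of negative total cost.

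First I would take an optimum solution $(a,y)$ to \eqref{eq:dual_subtour_lp_path} minimizing $a_s - a_t$; the minimum is attained, as noted before Lemma \ref{lemma:claim_1}. Then applying Proposition \ref{duallaminar} replaces $y$ by some $y'$ so that $(a, y')$ is again an optimum dual solution, now with laminar support. As only the $y$-part changes, $a_s - a_t$ is preserved, so $(a, y')$ still minimizes $a_s - a_t$ among optimum dual solutions, and Lemma \ref{lemma:claim_2} applies. This yields $s$-$t$-paths $P_1, P_2$ in $G$ such that
\[ |E(P_1) \cap \delta(U)| + |E(P_2) \cap \delta(U)| \le 2 \quad \text{for every } U \in \text{supp}(y'). \]

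The rest is a one-line computation. Since $G$ is the support graph of an optimum primal solution, every edge on $P_1$ or $P_2$ satisfies the dual constraint with equality, so \eqref{eq:costofwalk} gives
\[ c(P_1) + c(P_2) = 2(a_t - a_s) + c^{y'}(P_1) + c^{y'}(P_2). \]
Weighting the crossing bound above by $y'_U$ and summing yields $c^{y'}(P_1) + c^{y'}(P_2) \le 2 \sum_U y'_U$, while optimality of the dual objective gives $2 \sum_U y'_U = \lp - (a_t - a_s) = \lp + (a_s - a_t)$. Substituting produces $c(P_1) + c(P_2) \le \lp - (a_s - a_t)$, and the nonnegativity $c \ge 0$ forces $a_s - a_t \le \lp$, so $(a, y')$ is the required dual solution. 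I expect no real obstacle at this stage: all the difficulty has already been absorbed into Lemmas \ref{lemma:claim_1} and \ref{lemma:claim_2}, which first show that no tight dual set can separate $s$ from $t$ and then upgrade this to the existence of two $s$-$t$-paths with few crossings of $\text{supp}(y)$; once those are in hand, the statement is a direct application of LP duality.
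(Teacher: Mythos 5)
Your proposal is correct and follows the same route as the paper: take an optimum dual solution minimizing $a_s - a_t$, make its support laminar via Proposition~\ref{duallaminar} without touching $a$, obtain the two paths from Lemma~\ref{lemma:claim_2}, and combine \eqref{eq:costofwalk} with $c\ge 0$ and the dual objective value. The only (harmless) difference is that you spell out explicitly why applying Proposition~\ref{duallaminar} preserves minimality of $a_s-a_t$, which the paper leaves implicit.
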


\prove
Let $(a,y)$ be an optimum solution to \eqref{eq:dual_subtour_lp_path} that has laminar support and minimum $a_s - a_t$. 
Note that such an optimum dual solution exists by Proposition \ref{duallaminar}.
We again define the $c^y$ cost of an edge $e$ to be $c^y(e) = \sum_{U: e\in \delta(U)} y_U$.
By Lemma \ref{lemma:claim_2}, $G$ contains two $s$-$t$-paths $P_1$ and $P_2$ 
such that $c^y(P_1) + c^y(P_2) \le \sum_{\emptyset \ne U \subseteq V\setminus \{s,t\}} 2\cdot y_U$.
Then, using \eqref{eq:costofwalk},
\begin{align*}
 0 \ &\le\ c(P_1) + c(P_2) \\
 &=\ c^y(P_1) - (a_s - a_t) + c^y(P_2) - (a_s - a_t) \\
 &\le\  \sum_{\emptyset \ne U \subseteq V\setminus \{s,t\}} 2\cdot y_U - 2 (a_s - a_t),
\end{align*}
implying
\[ a_s -a_t \ \le \ \sum_{\emptyset \ne U \subseteq V\setminus \{s,t\}} 2\cdot y_U -  (a_s - a_t) \ = \ \lp. \]
\endproof

  	\begin{figure}
	\centering
	\begin{tikzpicture}[xscale=2]
	  \tikzstyle{vertex}=[circle,fill,minimum size=3,inner sep=2pt, outer sep=2pt] 
          \tikzstyle{edge}=[->, >=latex, thick]
          \node[vertex] (s) at (0,0) {};
          \node[vertex] (t) at (2,0) {};
          \node[vertex] (v) at (1,-1) {};
          \node[vertex] (w) at (1,1) {};
          \node () at (0.4,0.7) {\small$0$};
          \node () at (1.6,0.7) {\small$0$};
          \node () at (1.6,-0.7) {\small$0$};
          \node () at (0.4,-0.7) {\small$0$};
          \node () at (0.9,0) {\small$1$};
          \node[left=1pt] () at (s) {\Large$s$};
          \node[right=1pt] () at (t) {\Large$t$};
          \draw[edge] (v) to (w);
          \draw[edge] (s) to (v);
          \draw[edge] (s) to (w);
          \draw[edge] (v) to (t);
          \draw[edge] (w) to (t);
	\end{tikzpicture}
	\caption{
		Example with no optimum dual solutions with $a_s-a_t<\lp$: 
		The numbers next to the arcs denote their cost. For this instance we have $\lp=1$. 
		However adding an edge $(t,s)$ with cost $\gamma<1$ would result in an instance with $\lp=\gamma$. 
		By Lemma \ref{lemma:characterize_a_s} there cannot be an optimum dual solution where $a_s-a_t<1=\lp$.
		\label{fig:pot_diff_tight}}
	\end{figure}
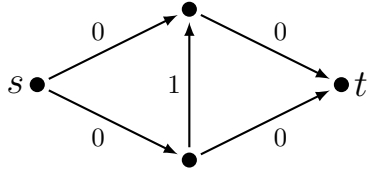
  
  We remark (although we will not need it) that Lemma \ref{lemma:bound_pot_difference} also
  holds for general instances. To adapt the proof, work with the subgraph $G'$ of $G$ that
  contains all edges of $G$ for which the dual constraint is tight. Now $G'$ plays the role of $G$ in the proof,
  and by choosing $\epsilon$ small enough in the proof of Lemma \ref{lemma:claim_1}
  we maintain dual feasibility also for the edges that are not in $G'$.
  
  By Lemma \ref{lemma:characterize_a_s}, this also shows that
  adding an edge $(t,s)$ of cost equal to the LP value does not change the value of an optimum LP solution. 
  
  The instance in Figure \ref{fig:pot_diff_tight} shows that the bound $a_s-a_t \le \lp$ is tight.
  Note that the bound is also tight for the instance in Figure \ref{fig:example} in which $x^*_e>0$ for all edges $e$, 
  and in which the integrality ratio is arbitrarily close to the best known lower bound of $2$.
 
 We will now prove our main result.
 
  \begin{theorem}\label{thm:main_result_integrality_gap}
  Let $\rhoatsp$ be the integrality ratio of \eqref{eq:subtour_lp_atsp}.
  Then the integrality ratio $\rhopath$ of \eqref{eq:subtour_lp_path} is at most
  $4\rhoatsp - 3$.
 \end{theorem}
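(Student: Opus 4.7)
The plan is to combine the three main ingredients already established in the excerpt: Lemma \ref{lemma:bound_by_path} (which reduces the task to bounding a merged walk), Lemma \ref{lemma:constructing_the_path} (which produces such a merged walk in terms of $a_s - a_t$), and Lemma \ref{lemma:bound_pot_difference} (which bounds $a_s - a_t$ by $\lp$). The combination is essentially algebraic: we want to invoke Lemma \ref{lemma:bound_by_path} with $d = 3$, since $(d+1)\rhoatsp - d = 4\rhoatsp - 3$ exactly when $d = 3$. So it suffices to verify the hypothesis of Lemma \ref{lemma:bound_by_path} with that choice of $d$.

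Concretely, I would fix an arbitrary instance $\Iscr = (G,c,s,t)$ of ATSPP and reduce, as allowed in Lemma \ref{lemma:bound_by_path}, to the case where $G$ is the support graph of an optimum primal solution. Next, I would invoke Lemma \ref{lemma:bound_pot_difference} to select an optimum dual solution $(a,y)$ with laminar support such that $a_s - a_t \le \lp$. Given any collection of $s$-$t$-walks $P_1,\dots,P_k$ in $G$ with total cost $L$, Lemma \ref{lemma:constructing_the_path} then yields a single $s$-$t$-walk $P$ visiting every vertex of $P_1,\dots,P_k$ whose cost satisfies
\begin{equation*}
c(P) \;\le\; L + (k-1)\bigl(\lp + 2(a_s - a_t)\bigr) \;\le\; L + (k-1)(\lp + 2\lp) \;=\; L + 3(k-1)\lp.
\end{equation*}
This is precisely the hypothesis of Lemma \ref{lemma:bound_by_path} with $d = 3$, so the conclusion $\rhopath \le 4\rhoatsp - 3$ follows immediately.

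Since all the heavy lifting has already been done, there is no real obstacle left in this final step; it is essentially a one-line synthesis. The only point worth being careful about is that Lemma \ref{lemma:constructing_the_path} requires the dual solution to have laminar support, which is exactly what Lemma \ref{lemma:bound_pot_difference} delivers. Thus the proof reduces to chaining the inequalities and quoting Lemma \ref{lemma:bound_by_path}.
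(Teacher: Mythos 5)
Your proposal is correct and follows exactly the same route as the paper: reduce to the support graph of an optimum primal solution, invoke Lemma \ref{lemma:bound_pot_difference} to obtain a laminar-support optimum dual with $a_s - a_t \le \lp$, feed this into Lemma \ref{lemma:constructing_the_path} to get $c(P) \le L + 3(k-1)\lp$, and conclude via Lemma \ref{lemma:bound_by_path} with $d=3$. The paper's proof is precisely this chain of citations, so there is nothing to add.
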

 \prove 
 Let $(G,c,s,t)$ be an instance of ATSPP, where $G$ is the support graph of an optimum solution to \eqref{eq:subtour_lp_path}.
 By Lemma \ref{lemma:bound_pot_difference}, 
 there is an optimum dual solution $(a,y)$ with laminar support and $a_s-a_t \le \lp$.
 Using Lemma \ref{lemma:constructing_the_path}, this implies that the condition of
 Lemma \ref{lemma:bound_by_path} is fulfilled for $d=3$. 
 This shows $\rhopath \le 4\rhoatsp - 3$.
 \endproof

\section{Node-weighted and unweighted instances}\label{sec:nodeweighted=unweighted} 

Here we observe that, for ATSP, 
node-weighted instances are not much more general than unweighted instances.
We call an LP solution $x$ minimal if there is no feasible solution $x'\ne x$ with $x'\le x$ componentwise.

\begin{lemma}\label{lemma:minimalsolutions}
 For every minimal solution $x$ of \eqref{eq:subtour_lp_atsp}, we have $x(E(G))\le n^2$, where $n=|V(G)|$.
\end{lemma}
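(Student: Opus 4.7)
The plan is to exploit the cycle decomposition of a minimal $x$ together with the combinatorial structure of tight subtour cuts.

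First I would observe that minimality forces every directed cycle $C$ in the support graph to cross some tight subtour cut: otherwise $x - \epsilon\mathbf{1}_C$ would be feasible and strictly smaller than $x$ componentwise for small $\epsilon>0$, because subtracting a circulation preserves flow conservation and nonnegativity (for $\epsilon$ small), and by assumption the subtour constraints all have positive slack where $\mathbf{1}_C(\delta(U))>0$. This contradicts minimality.

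Second, using the standard flow decomposition of the circulation $x$, I would write $x=\sum_{i=1}^N \lambda_i\mathbf{1}_{C_i}$ with simple directed cycles $C_i$ in the support of $x$ and $\lambda_i>0$. Since each $|C_i|\le n$, we obtain $x(E(G))=\sum_i\lambda_i|C_i|\le n\sum_i\lambda_i$, so it suffices to show $\sum_i\lambda_i\le n$.

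Third, I would invoke the crossing-family structure of tight cuts. Submodularity of $x(\delta(\cdot))$ implies that whenever $U,W$ are tight and $U\cap W, U\cup W$ are valid subtour sets, both $U\cap W$ and $U\cup W$ are also tight. A standard uncrossing argument then produces a laminar family $\mathcal L$ of tight cuts of size $O(n)$ such that every cycle $C_i$ still crosses some member of $\mathcal L$. For each $U\in\mathcal L$ the inequalities
\[ 2\cdot\!\!\sum_{i:\,C_i\text{ crosses }U}\!\!\lambda_i \;\le\; \sum_{i:\,C_i\text{ crosses }U}\lambda_i\cdot \mathbf{1}_{C_i}(\delta(U)) \;\le\; x(\delta(U)) \;=\; 2 \]
give $\sum_{i:\,C_i\text{ crosses }U}\lambda_i\le 1$, using that $\mathbf{1}_{C_i}(\delta(U))\ge 2$ whenever $C_i$ crosses $U$. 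Summing over $\mathcal L$ and using that every cycle in the decomposition crosses at least one member of $\mathcal L$ then yields $\sum_i\lambda_i\le |\mathcal L|=O(n)$, whence $x(E(G))\le n\sum_i\lambda_i\le n^2$.

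The main obstacle I expect is the third step, specifically ensuring that the laminar family $\mathcal L$ obtained by uncrossing is rich enough that \emph{every} cycle in the flow decomposition crosses some member of $\mathcal L$. While uncrossing of tight cuts is a standard technique, the coupling to the cycle decomposition is not automatic; I anticipate that the uncrossing has to be guided by the cycles (or, equivalently, that one chooses a maximal laminar sub-family of tight cuts and argues that any cycle avoiding it could be used to decrease $x$, contradicting minimality).
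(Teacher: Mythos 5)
Your approach is genuinely different from the paper's, and it has a real gap that you yourself flag in your last paragraph: the uncrossing step does not preserve the property that every cycle $C_i$ crosses some member of $\mathcal L$. Concretely, suppose $U,W$ are crossing tight cuts and $C$ lies entirely in $(U\setminus W)\cup(W\setminus U)$ and meets both parts. Then $C$ crosses both $U$ and $W$, but if you replace $U,W$ by $U\cap W$ and $U\cup W$, the cycle $C$ crosses neither (it misses $U\cap W$ and is contained in $U\cup W$), so you lose the certificate for $C$. The standard uncrossing of tight cuts yields a laminar family whose incidence vectors \emph{span} the tight-cut space, which is a linear-algebraic statement and does not carry the combinatorial guarantee you need. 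One could try uncrossing into $U\setminus W, W\setminus U$ instead in that situation (both are also tight by posi-modularity), but different cycles may force conflicting choices, and the proposal gives no argument that a single laminar family works for all cycles simultaneously. A secondary, smaller issue: a laminar family of proper nonempty subsets of an $n$-set has up to $2n-2$ members, so even if the gap were closed you would obtain roughly $x(E(G))\le 2n^2$ rather than $n^2$; this is harmless for the application downstream but does not literally establish the stated bound.

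The paper sidesteps tight subtour cuts entirely. It observes that feasibility for \eqref{eq:subtour_lp_atsp} is equivalent to being a circulation lying in the $r$-arborescence polyhedron $P=\{y\ge 0: y(\delta^-(U))\ge 1 \text{ for all } \emptyset\ne U\subseteq V\setminus\{r\}\}$, chooses a minimal $y\le x$ in $P$, and invokes Edmonds' theorem to get $y(E(G))=n-1$ exactly. It then writes $x=\sum_j\lambda_j\chi^{C_j}$, $y=\sum_j\lambda_j\chi^{S_j}$ with $S_j\subseteq C_j$, and argues that each $S_j$ is nonempty, since otherwise $x-\lambda_j\chi^{C_j}$ would still be a circulation in $P$ and hence feasible, contradicting minimality of $x$. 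Then $|C_j|\le n\le n|S_j|$ gives $x(E(G))\le n\cdot y(E(G))=n(n-1)$. Your first two steps are in the same spirit (cycle decomposition plus a per-cycle certificate coming from minimality), but the arborescence polytope supplies a certificate whose sizes sum cleanly to $n-1$, whereas tight subtour cuts do not organize laminarly with respect to the cycle decomposition.
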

\prove
Choose an arbitrary root $r\in V$ and let 
$P=\{y \in \mathbb{R}^{E(G)}_{\ge 0}: y(\delta^-(U))\ge 1 \text{ for } \emptyset\not=U\subseteq V\setminus\{r\}\}$.
A vector is feasible for \eqref{eq:subtour_lp_atsp} if and only if it is a circulation that belongs to $P$.
Let $y\le x$ be a minimal vector in $P$.
The minimal vectors in $P$ are the convex combinations of incidence vectors of spanning arborescences rooted at $r$ (\cite{Edm67});
hence $y(E(G))=n-1$.
There are cycles $C_j$ and edge sets $S_j \subseteq C_j$ ($j=1,\ldots,l$) such that
$x=\sum_{j=1}^l \lambda_j \chi^{C_j}$ and $y= \sum_{j=1}^l \lambda_j \chi^{S_j}$ for some positive coefficients $\lambda_j$.
Note that none of the sets $S_j$ can be empty because otherwise $x'= x -  \lambda_j \chi^{C_j}$ would be a circulation that belongs to $P$,
contradicting the minimality of $x$.
We conclude
$ x(E(G)) = \sum_{j=1}^l \lambda_j |C_j| \le \sum_{j=1}^l  \lambda_j \cdot n |S_j| = n \cdot y(E(G)) = n(n-1)$.
\endproof

\begin{lemma}\label{lemma:unweighted}
Let $\epsilon>0$.
Let $(G,c)$ be a node-weighted instance of ATSP with $n$ vertices.
Then we can find in polynomial time a constant $M>0$ and an unweighted digraph $G'$ with $O(\frac{n^2}{\epsilon})$ vertices such that
\begin{enumerate}[(i)]
 \item $\lp_{(G,c)} \le M \cdot \lp_{G'} \le (1+\epsilon) \lp_{(G,c)}$, 
 \item $\opt_{(G,c)} \le M \cdot \opt_{G'} \le (1+\epsilon) \opt_{(G,c)}$, and 
 \item for every tour $F'$ in the unweighted digraph $G'$ there is a corresponding tour $F$ in $G$ such that $c(F) \le M |F'|$
       and $F$ can be obtained from $F'$ in polynomial time.
\end{enumerate}
\end{lemma}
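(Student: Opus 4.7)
The plan is to replace each vertex $v \in V(G)$ by a short directed gadget whose length in unit-cost edges encodes $c_v$, and to choose a single scaling constant $M$ that matches the LP and integer values of $(G,c)$ and $G'$ simultaneously.

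\textbf{Construction.} Let $W := \sum_v c_v$; since $c(e) = c_v + c_w$ on $e=(v,w)$ and every feasible LP solution $x$ of \eqref{eq:subtour_lp_atsp} has $x(\delta^+(v)) \ge 1$, both $\lp_{(G,c)}$ and $\opt_{(G,c)}$ are at least $2W$. Set $M := \epsilon W /(4 n^2)$ and $N_v := \lceil 2 c_v/M \rceil$, and replace each $v$ by a directed path $v^- = w_0^v, w_1^v, \ldots, w_{N_v}^v = v^+$ of $N_v$ unit-cost edges (collapsing to a single vertex when $c_v = 0$); every edge $(u,v) \in E(G)$ becomes $(u^+, v^-)$ in $G'$. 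Since $\sum_v N_v \le 2W/M + n$, $G'$ has $O(n^2/\epsilon)$ vertices.

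\textbf{LP bound (i).} Given a \emph{minimal} optimum $x^*$ of \eqref{eq:subtour_lp_atsp} for $(G,c)$, let $x'$ carry flow $x^*(\delta^+(v))$ on every gadget edge of $v$ and flow $x^*_{(u,v)}$ on every connecting edge. Flow conservation is automatic; the cut constraint $x'(\delta(U')) \ge 2$ follows from a short case analysis on how $U'$ intersects each gadget. Whenever $U'$ slices some $v$-gadget the contribution from that gadget alone equals $2 x^*(\delta^+(v)) \ge 2$: an interior slice contributes two gadget boundary edges of weight $x^*(\delta^+(v))$ each, while a prefix (resp.\ suffix) slice pairs one gadget boundary edge with the external flow $x^*(\delta^-(v))$ (resp.\ $x^*(\delta^+(v))$) through $v^-$ (resp.\ $v^+$). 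When $U'$ respects all gadget boundaries, its cut in $G'$ maps to a cut in $G$ of the same weight. Using $M N_v \in [2 c_v, 2 c_v + M)$,
\begin{equation*}
M \cdot x'(E(G')) \ = \ M\!\!\sum_v x^*(\delta^+(v)) N_v + M \cdot x^*(E(G)) \ \le \ \lp_{(G,c)} + 2 M \cdot x^*(E(G)),
\end{equation*}
and Lemma \ref{lemma:minimalsolutions} bounds $x^*(E(G)) \le n^2$, so the right side is $\le (1+\epsilon)\lp_{(G,c)}$. Conversely, the projection $x_{(u,v)} := y^*_{(u^+, v^-)}$ of any optimum $y^*$ on $G'$ is feasible for $(G,c)$ with $c(x) = 2\sum_v c_v\, y^*_{(v^-,w_1^v)} \le M \cdot y^*(E(G'))$, giving $\lp_{(G,c)} \le M \cdot \lp_{G'}$.

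\textbf{Integer bounds (ii) and (iii).} For (iii), contracting all gadgets in a tour $F'$ of $G'$ produces a tour $F$ of $G$ satisfying $c(F) = 2 \sum_v c_v k_v \le M \sum_v N_v k_v \le M |F'|$, where $k_v$ is the number of times the $v$-gadget is visited. Applying this to an optimum tour of $G'$ yields $\opt_{(G,c)} \le M \cdot \opt_{G'}$. For the reverse inequality, one first replaces an arbitrary optimum tour of $(G,c)$ by one with $|F| \le n(n-1)$ by taking the metric completion $\tilde G$, picking an optimum Hamiltonian cycle $H$ in $\tilde G$ (which exists because $\tilde G$ is complete and satisfies the triangle inequality), and expanding each edge of $H$ to its shortest path in $G$. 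Lifting this bounded-length optimum $F$ through the gadgets gives a tour $F'$ in $G'$ with $M|F'| \le c(F) + 2 M |F| \le \opt_{(G,c)} + \tfrac{\epsilon}{2} W \le (1+\epsilon)\opt_{(G,c)}$. The most delicate step is the cut verification of $x'$ across every way $U'$ can meet the gadget paths; the metric-completion shortcut bounding $|F| \le n(n-1)$ is the other easily overlooked ingredient required to keep the vertex count at $O(n^2/\epsilon)$.
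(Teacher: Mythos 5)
Your construction and calculations are correct and match the paper's approach: replace each vertex $v$ by a directed path whose length discretizes $2c_v/M$, push the flow $x^*(\delta^+(v))$ through the gadget, and observe that the quantity $\sum_v 2c_v\, x^*(\delta^+(v))$ recovers $c(x^*)$, with the discretization error controlled via Lemma~\ref{lemma:minimalsolutions}. Your choice of $\lceil\cdot\rceil$ and $M=\epsilon W/(4n^2)$ in place of the paper's $\lfloor\cdot\rfloor$ and $M=2\epsilon W/n^2$ is a harmless variation that merely shifts which inequality of (i)--(ii) is immediate and which carries the $(1+\epsilon)$ slack.

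The one genuinely different ingredient is how you certify that some optimum tour of $(G,c)$ has $O(n^2)$ edges. The paper gets this by asserting that an optimum tour can be taken to be a minimal LP solution and invoking Lemma~\ref{lemma:minimalsolutions} for it; you instead pass to the metric completion, take an optimum Hamiltonian cycle there, and expand each of its $n$ edges into a shortest (hence simple, length~$\le n-1$) path back in $G$. Your route is more self-contained and avoids having to argue that an integral optimum is also LP-minimal; the paper's route avoids the metric-completion detour. Both are valid.

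Two small remarks, neither a genuine gap. First, your sketch of the cut verification for $x'$ (``a prefix slice pairs one gadget boundary edge with the external flow $x^*(\delta^-(v))$'') is informal and does not literally cover, e.g., a $U'$ that slices a gadget while also containing other vertices; the clean argument is that any sliced gadget contributes at least one gadget edge of weight $x^*(\delta^+(v))\ge 1$ to $\delta^+(U')$ or $\delta^-(U')$, and flow conservation of $x'$ then forces $x'(\delta(U'))\ge 2$. Second, the optimum Hamiltonian cycle in $\tilde G$ is only used existentially, so there is no polynomial-time obstruction, but it is worth saying so explicitly since the lemma's header asserts polynomial-time computability of $M$ and $G'$ (which your construction clearly provides) and of the map in (iii) (which is just gadget contraction).
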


\prove
Let $c_v\ge 0$ ($v\in V(G)$) be the node weights, i.e., $c(v,w)=c_v+c_w$ for all $(v,w)\in E$.
Let $c(V(G))=\sum_{v\in V(G)} c_v$ denote the sum of all node weights. 
If $c(V(G))=0$, the instance is trivial, we can choose $G'$ to consist of a single vertex.

Otherwise let $n=|V(G)|$, $M:=\frac{2\epsilon \cdot c(V(G))}{n^2}$
and $\bar c_v := \lfloor\frac{2c_v}{M}\rfloor$ for all $v\in V(G)$.
Replace every vertex $v$ of $G$ with $\bar c_v > 0$ by two vertices $v^-$ and $v^+$, such that $v^-$ inherits the entering edges and $v^+$ inherits the outgoing edges,
and add a path $P_v$ of $\bar c_v$ edges from $v^-$ to $v^+$. This defines $G'$. 
Note that $|V(G')| = n + \sum_{v\in V(G)}\bar c_v \le n+\frac{n^2}{\epsilon}$.

Every solution $x$ to \eqref{eq:subtour_lp_atsp} for $(G,c)$ corresponds to a solution $x'$ to \eqref{eq:subtour_lp_atsp} for $G'$, simply
by setting $x'_e:=x(\delta^+(v))$ for all edges $e$ of $P_v$. 
Then
\begin{align*}
 x'(E(G')) \ = \ & \sum_{v\in V(G)}  \left(1+\bar c_v\right) \, x(\delta^+(v)) \\
\ = \ & \sum_{v\in V(G)}  \left(1+\left\lfloor\frac{2c_v}{M}\right\rfloor\right) \, x(\delta^+(v))  \\
\ = \ & \delta \cdot x(E(G)) + \sum_{v\in V(G)}  \frac{2c_v}{M}  \, x(\delta^+(v)) \\
\ = \ & \delta  \cdot x(E(G))+ \frac{1}{M}  c(x)
\end{align*}
for some $\delta \in [0,1]$. Hence
$$ c(x) \ \le \ M x'(E(G)),$$
and for minimal solutions we have $x(E(G)) \le n^2$ by Lemma \ref{lemma:minimalsolutions}, which implies 
 $\delta  \cdot x(E(G)) \le n^2 = \epsilon \frac{2c(V(G))}{M}\le \epsilon \frac{c(x)}{M}$ and thus
 $$M x'(E(G))\ \le \ (1+\epsilon) c(x).$$
Because tours are integral LP solutions, and optimum LP solutions and optimum tours can be assumed to be minimal,
this completes the  proof of (i) and (ii).
To prove (iii), observe that contracting the paths $P_v$ in a tour $F'$ yields a tour $F$ as claimed.
\endproof

This immediately implies:

\begin{theorem}
The integrality ratio of \eqref{eq:subtour_lp_atsp} is the same for unweighted and for node-weighted instances.
For any constants $\alpha\ge 1$ and $\epsilon>0$, there is a polynomial-time $(\alpha+\epsilon)$-approximation algorithm for node-weighted instances
if there is a polynomial-time $\alpha$-approximation algorithm for unweighted instances.
\end{theorem}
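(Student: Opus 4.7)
The plan is to derive both claims of the theorem directly from Lemma \ref{lemma:unweighted}. For the integrality ratio equality, I first handle the trivial direction: every unweighted instance $G'$ is a node-weighted instance (set $c_v := \tfrac{1}{2}$ for every vertex, which yields edge cost $1$), and the integrality ratio is scale-invariant, so the supremum over unweighted instances is at most the supremum over node-weighted instances.

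For the nontrivial direction, I take an arbitrary node-weighted instance $(G,c)$ (with $\lp_{(G,c)}$ nonzero and finite) and an arbitrary $\epsilon>0$, and apply Lemma \ref{lemma:unweighted} to obtain an unweighted digraph $G'$ and a scalar $M>0$ satisfying (i)--(iii). Combining $\opt_{(G,c)}\le M\cdot\opt_{G'}$ from (ii) with $\lp_{(G,c)} \ge \tfrac{M\cdot\lp_{G'}}{1+\epsilon}$ from (i) gives
\[
\frac{\opt_{(G,c)}}{\lp_{(G,c)}} \ \le\ (1+\epsilon)\,\frac{\opt_{G'}}{\lp_{G'}}.
\]
Since $\epsilon>0$ was arbitrary, the supremum over node-weighted instances is at most the supremum over unweighted instances, and equality follows.

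For the algorithmic statement, given a node-weighted instance $(G,c)$ and $\epsilon>0$, I would set $\epsilon' := \epsilon/\alpha$, invoke the polynomial-time construction of Lemma \ref{lemma:unweighted} with parameter $\epsilon'$ to build $G'$ and $M$, run the assumed $\alpha$-approximation for unweighted ATSP on $G'$ to obtain a tour $F'$ with $|F'|\le\alpha\cdot\opt_{G'}$, and finally apply part (iii) to lift $F'$ to a tour $F$ in $G$ with $c(F)\le M|F'|$. Chaining the inequalities yields
\[
c(F) \ \le\ \alpha M\opt_{G'} \ \le\ \alpha(1+\epsilon')\opt_{(G,c)} \ =\ (\alpha+\epsilon)\opt_{(G,c)},
\]
while $|V(G')|=O(n^2/\epsilon')$ keeps the entire procedure polynomial in the input size (for fixed $\alpha$ and $\epsilon$).

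There is no real obstacle here: once Lemma \ref{lemma:unweighted} is in hand, the theorem reduces to bookkeeping of the three inequalities in (i)--(iii). The only point requiring a separate argument is the easy embedding of unweighted into node-weighted instances used to obtain the trivial direction of the integrality ratio equality.
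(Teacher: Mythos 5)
Your proof is correct and follows essentially the same route as the paper: both derive the integrality-ratio equality from parts (i) and (ii) of Lemma \ref{lemma:unweighted} and the approximation statement from part (iii) with the parameter choice $\epsilon' = \epsilon/\alpha$. You are slightly more explicit than the paper about the trivial direction (embedding unweighted instances into node-weighted ones via uniform weight $\tfrac{1}{2}$), but that is only bookkeeping the paper leaves implicit.
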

\prove 
The equality of the integrality ratio for unweighted and for node-weighted instances follows from 
Lemma \ref{lemma:unweighted} (i) and (ii).
Now suppose we have a polynomial-time $\alpha$-approximation algorithm for unweighted instances.
Then for a node-weighted instance $(G,c)$ we apply Lemma \ref{lemma:unweighted} with $\epsilon'=\sfrac{\epsilon}{\alpha}$ 
and apply our $\alpha$-approximation algorithm to the resulting digraph $G'$. Let $F'$ be the resulting tour in $G'$.
By (iii) of Lemma \ref{lemma:unweighted}, this tour corresponds to a tour $F$ in $G$ such that 
\[ c(F) \ \le \ M|F'| \ \le \ \alpha\cdot M \opt_{G'} \ \le \ (1+\epsilon')\alpha\cdot \opt_{(G,c)} \ = \ (\alpha+\epsilon)\cdot \opt_{(G,c)}.\]
\endproof

\begin{figure}
\begin{center}
 \begin{tikzpicture}[xscale=0.9]
  \def\arccolor{black}
  \def\pathcolor{blue}
  \def\gadgetcolor{darkgreen}
  \def\height{3}
  \tikzstyle{vertex}=[circle,fill,minimum size=3,inner sep=0,outer sep=1]
  \tikzstyle{edge}=[->, >=latex, \arccolor]
  \tikzset{snake arrow/.style=
{->,
decorate,
decoration={snake,amplitude=.4mm,segment length=2mm,post length=1mm}}
}
\def\posG{4.5}
\node (G0) at (1.5, \posG) {$G_0$};
 \foreach [evaluate=\i as \xcoord using 5+2*\i] \i in {0, ..., 4} {
    \node[vertex] (g\i) at (\xcoord,\posG) {};
 }
 \foreach [evaluate=\i as \iplusone using int(1+\i)] \i in {0, ..., 3} {
    \draw[edge, bend left] (g\i) to (g\iplusone);
    \draw[edge, bend left] (g\iplusone) to (g\i);
 }
 \node[left=1pt] () at (g0) {$v_0 =v_0'$};
 \node[right=1pt] () at (g4) {$w_0 =w_0'$};
\node (Gi) at (1.5, 1.5) {$G_i$};
 \foreach [evaluate=\i as \leftcorner using 1+3*\i] \i in {1, ..., 4} {
    \begin{scope}[shift={(\leftcorner,0)}, \gadgetcolor]
      \node[vertex] (w\i) at (0,0) {};
      \node[vertex] (w'\i) at (1,0) {};
      \node[vertex] (v\i) at (0,\height) {};
      \node[vertex] (v'\i) at (1,\height) {};
      \fill[opacity=0.3] (0,0) rectangle (1,\height);
      \node[black] (name) at (0.5,1.5){\small $G_{i-1}$};
    \end{scope}
 }
 \begin{scope}[\gadgetcolor]
 \foreach [evaluate=\i as \leftcorner using 1+3*\i, evaluate=\i as \rightcorner using 2+3*\i] \i in {1,3} {
    \node[above=2pt] () at (\leftcorner,\height) {\scriptsize$v'_{i-1}$};
    \node[above=2pt] () at (\rightcorner,\height) {\scriptsize$v_{i-1}$}; 
    \node[below=3pt] () at (\leftcorner,0) {\scriptsize$w_{i-1}$};
    \node[below=1pt] () at (\rightcorner,0) {\scriptsize$w'_{i-1}$}; 
 }
 \foreach [evaluate=\i as \leftcorner using 1+3*\i,  evaluate=\i as \rightcorner using 2+3*\i] \i in {2,4} {
    \node[above=2pt] () at (\rightcorner,\height) {\scriptsize$v'_{i-1}$};
    \node[above=2pt] () at (\leftcorner,\height) {\scriptsize$v_{i-1}$}; 
    \node[below=3pt] () at (\rightcorner,0) {\scriptsize$w_{i-1}$};
    \node[below=1pt] () at (\leftcorner,0) {\scriptsize$w'_{i-1}$}; 
 }
\end{scope}
\foreach  [evaluate=\i as \xcoord using 3*\i] \i in {1,2,3,4,5}{
   \node[vertex, \pathcolor] (o\i) at (\xcoord, \height) {};
   \node[vertex, \pathcolor] (u\i) at (\xcoord, 0) {};
      \pgfmathparse{int(mod(\i, 2))}
      \ifthenelse{\equal{\pgfmathresult}{0}}{
         \draw[edge, \pathcolor, snake arrow] (u\i) to (o\i);
      } {
         \draw[edge, \pathcolor, snake arrow] (o\i) to (u\i);
      }
 }
 \node[left=1pt] () at (o1) {$v_i$};
 \node[left=1pt] () at (u1) {$v'_i$};
  \node[right=1pt] () at (o5) {$w_i$};
 \node[right=1pt] () at (u5) {$w'_i$};
 
 \foreach  [evaluate=\i as \xcoord using 3*\i, evaluate=\i as \iplusone using \i +1, evaluate] \i in {1,2,3,4}{
    \pgfmathparse{int(mod(\i, 2))}
      \ifthenelse{\equal{\pgfmathresult}{0}}{
         \draw[edge] (o\i) to (v\i);
         \draw[edge] (w\i) to (u\i);
      } {
         \draw[edge] (v\i) to (o\i);
         \draw[edge]  (u\i) to (w\i);
      }
 }
 \foreach  [evaluate=\i as \xcoord using 3*\i, evaluate=\i as \iminusone using \i -1] \i in {2,3,4,5}{
      \pgfmathparse{int(mod(\i, 2))}
      \ifthenelse{\equal{\pgfmathresult}{0}}{
         \draw[edge] (o\i) to (v'\iminusone);
         \draw[edge] (w'\iminusone) to (u\i);
      } {
         \draw[edge] (v'\iminusone) to (o\i);
         \draw[edge] (u\i) to (w'\iminusone);
      }
 }
 \end{tikzpicture}
\end{center}
\caption{
Constructing a family of digraphs with integrality ratio arbitrary close to $2$ for ATSP with unit weights.
For a fixed even number $l\ge 4$ we define graphs $G_0, G_1,\dots $. 
The graph $G_0$ consists of a bidirected path of length $l$.
Then we construct $G_i$ from $G_{i-1}$ as in the picture. 
The picture shows the construction for $l=4$; in general, there are $l$ copies of the graph $G_{i-1}$ (shown in green). 
The blue wiggly paths indicate paths of length $d_i$, where $d_0 =0$ and $d_i = l^{i} -d_{i-1} -2$.
Let $G_i'$ be the graph arising from $G_i$ by identifying the blue $v_i$-$v'_i$-path with the blue 
$w_i$-$w'_i$-path. Then for $i\rightarrow \infty$, the integrality ratio of $G_i'$ converges to $2-\frac{2}{l}$
(\cite{BoyEM15}). \label{fig:construction_digraph_example}
}
\end{figure}
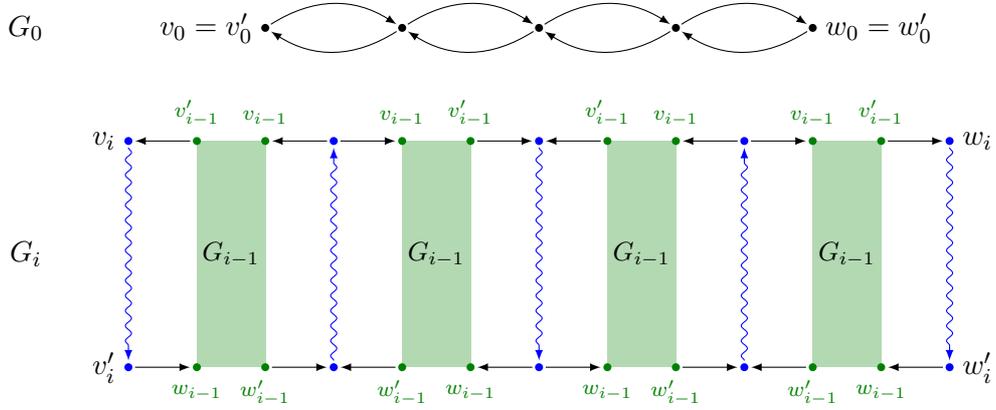

\begin{figure}
\begin{center}
 \begin{tikzpicture}[>=latex,  every node/.style={fill, circle, inner sep =1pt}, scale=0.7]

\def\bidirectedspokelength{7}
\def\onewayspokelength{5}
\def\nodedistancescale{3}
\pgfmathparse{(\bidirectedspokelength - 1) * (\onewayspokelength - 1) / \nodedistancescale}
\edef\spokelengthproduct{\pgfmathresult}
      
\def\circularcolor{black}
\def\onewayspokecolor{blue}
\def\bidirectedspokecolor{darkgreen}
      
\foreach [evaluate=\i as \degree using 30*\i] \i in {0, 1, ..., 12} {
      \pgfmathparse{int(mod(\i, 2))}
      \ifthenelse{\equal{\pgfmathresult}{0}}{
	      \foreach [evaluate=\j as \radius using \onewayspokelength*(\j -1)/\spokelengthproduct + 1] \j in {1, 2, ..., \bidirectedspokelength} {
		      \node[\bidirectedspokecolor] (v\i\j) at (\degree:\radius) {};
	      }
      } {
	      \foreach [evaluate=\j as \radius using \bidirectedspokelength*(\j -1)/\spokelengthproduct + 1] \j in {1, 2, ..., \onewayspokelength} {
		      \node[\onewayspokecolor] (v\i\j) at (\degree:\radius) 	{};
	      }
      }
}
      
\foreach \i in {0, 1, ..., 12} {
      \pgfmathparse{int(mod(\i, 2))}
      \ifthenelse{\equal{\pgfmathresult}{0}}{
	      \foreach [evaluate=\j as \jminusone using int(\j -1)] \j in {2, 3, ..., \bidirectedspokelength} {
		      \draw[->, color=\bidirectedspokecolor] (v\i\jminusone) to[bend left=30] (v\i\j);
		      \draw[->, color=\bidirectedspokecolor] (v\i\j) to[bend left] (v\i\jminusone);
	      }
      }{
	      \foreach [evaluate=\j as \jminusone using int(\j -1)] \j in {2, 3, ..., \onewayspokelength} {
		      \pgfmathparse{int(mod(\i,4))}
		      \ifthenelse{\equal{\pgfmathresult}{3}} {
			      \draw[->, color=\onewayspokecolor] (v\i\jminusone) to (v\i\j);
		      } {
			      \draw[->, color=\onewayspokecolor] (v\i\j) to (v\i\jminusone);
		      }
	      }
      }
      
      \pgfmathparse{int(mod(\i +1,12))}
      \edef\iplusone{\pgfmathresult}
     
      \pgfmathparse{int(mod(\i,4))}
      \ifthenelse{\equal{\pgfmathresult}{0}} {
	      \draw[->, color=\circularcolor] (v\iplusone1) to (v\i1);
	      \draw[->, color=\circularcolor] (v\i\bidirectedspokelength) to (v\iplusone\onewayspokelength);
      } {}
      
      \ifthenelse{\equal{\pgfmathresult}{1}} {
	      \draw[->, color=\circularcolor] (v\i1) to (v\iplusone1);
	      \draw[->, color=\circularcolor] (v\iplusone\bidirectedspokelength) to (v\i\onewayspokelength);
      } {}
     
      \ifthenelse{\equal{\pgfmathresult}{2}} {
	      \draw[->, color=\circularcolor] (v\i1) to (v\iplusone1);
	      \draw[->, color=\circularcolor] (v\iplusone\onewayspokelength) to (v\i\bidirectedspokelength);
      } {}
      
      \ifthenelse{\equal{\pgfmathresult}{3}} {
	      \draw[->, color=\circularcolor] (v\iplusone1) to (v\i1);
	      \draw[->, color=\circularcolor] (v\i\onewayspokelength) to (v\iplusone\bidirectedspokelength);
      } {}
}   
\end{tikzpicture}
\end{center}
\caption{The graph $G_1'$ for $l=6$. An optimum LP solution has value 1 on the blue edges and value 
$\sfrac{1}{2}$  on all other edges and hence we have $\lp = |V(G_1')|$.\label{fig:concrete_digraph_example}}
\end{figure}
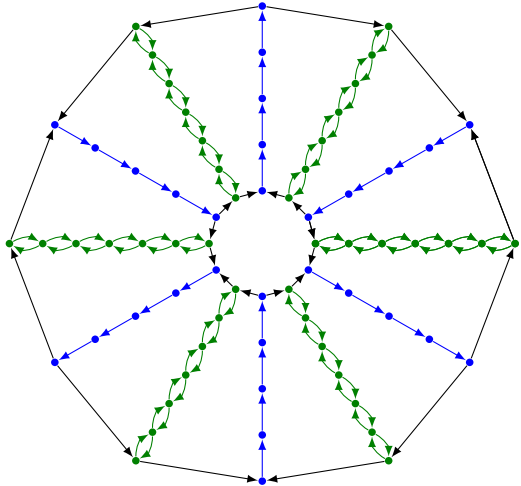

In particular, this implies that the node-weighted instances from \cite{BoyEM15} can be transformed to unweighted instances whose
integrality ratio tends to 2. 
For convenience we show these instances in Figure \ref{fig:construction_digraph_example} and Figure \ref{fig:concrete_digraph_example}.
Figure \ref{fig:construction_digraph_example} shows the general construction of the family of instances, 
Figure \ref{fig:concrete_digraph_example} a concrete example.
To obtain these instances we have replaced every vertex $v$ in the node-weighted instances
with node-weight $c_v$ by a path of length $2c_v -1$ similar to the proof of Lemma \ref{lemma:unweighted}.
So, contracting the blue paths of length $d_i$ in Figure \ref{fig:construction_digraph_example} and 
setting the node-weight of the resulting vertex to $\frac{d_i + 1}{2}$ and node-weights in $G_0$ to $\frac{1}{2}$
results in the instances from \cite{BoyEM15}.
Then, LP solutions (and tours) in the node-weighted instance correspond to LP solutions (and tours) of the same cost in the 
unweighted instance.
It seems that previously only unweighted instances with integrality ratio at most $\frac{3}{2}$ were known
(e.g. \cite{Got13}).

By splitting an arbitrary vertex into two copies $s$ and $t$, both inheriting all incident edges, this also yields a family of unweighted digraph
instances of ATSPP whose integrality ratio tends to two. We summarize:

\begin{corollary}
 The integrality ratio for unweighted digraph instances is at least two, both for \eqref{eq:subtour_lp_atsp} and \eqref{eq:subtour_lp_path}.
\end{corollary}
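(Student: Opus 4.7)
The plan is to combine the node-weighted ATSP lower bound of \cite{BoyEM15} with the path-replacement construction of Lemma \ref{lemma:unweighted} to get the ATSP half, and then to split a vertex to obtain the ATSPP half. For the ATSP step, I would take the \cite{BoyEM15} family of node-weighted instances, whose integrality ratio tends to $2$. Since every node weight in this construction is a half-integer, I can replace each vertex $v$ by a directed path of length $2c_v-1$ (as in Lemma \ref{lemma:unweighted}) without introducing any rounding loss, obtaining the unweighted digraphs $G_i'$ displayed in Figures \ref{fig:construction_digraph_example} and \ref{fig:concrete_digraph_example}. An LP solution $x$ in the node-weighted instance lifts to $x'$ in $G_i'$ by setting $x'$ equal to $x(\delta^+(v))$ on each edge of the replacement path; the calculation at the end of the proof of Lemma \ref{lemma:unweighted} specialises to $c(x)=x'(E(G_i'))$ because the error term $\delta\cdot x(E(G))$ vanishes. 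Tours contract back to tours of equal length, so both $\lp$ and $\opt$ are preserved, and the integrality ratio of the unweighted family still tends to $2$.

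For the ATSPP half I would pick an arbitrary vertex $v$ in one of the unweighted digraphs $G_i'$ constructed above and split it into two new vertices $s$ and $t$ that both inherit all of $v$'s incident edges. The resulting instance is again unweighted, and I would show that both $\opt$ and $\lp$ are preserved by this operation. For $\opt$, any $s$-$t$-tour in the split graph, after identifying $s$ with $t$, becomes an ATSP closed walk through $v$ visiting all vertices, and conversely every ATSP tour can be cut open at $v$ into an $s$-$t$-tour of equal length. For $\lp$, feasibility transfers in both directions: given an ATSP solution $x$ with $x(\delta^+(v))=x(\delta^-(v))=k\ge 1$ (where $k\ge 1$ because the cut $U=\{v\}$ forces $x(\delta^+(v))\ge 1$), I distribute the flow between $s$ and $t$ so as to create the required excesses $-1$ at $s$ and $+1$ at $t$, while the cut constraints $x'(\delta(U))\ge 2$ for $\emptyset\ne U\subseteq V\setminus\{s,t\}$ follow from the identity $x'(\delta(U))=x(\delta(U))$ applied to the natural identification of cuts in the split graph with cuts in the original.

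The only real obstacle is the bookkeeping in the vertex-splitting step — verifying that the fractional redistribution between $s$ and $t$ can be chosen consistently with all ATSPP degree and cut constraints — but this reduces entirely to the inequality $k\ge 1$, which holds for every feasible ATSP LP solution. Combining the two steps, the unweighted ATSPP family inherits integrality ratio tending to $2$, giving the bound for \eqref{eq:subtour_lp_path} as well.
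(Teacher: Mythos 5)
Your proposal is correct and follows the paper's own argument essentially line by line: the ATSP half uses the Boyd--Elliott-Magwood node-weighted family and replaces a vertex of weight $c_v$ (a half-integer in that construction) by a directed path of length $2c_v-1$, which makes the transformation of Lemma~\ref{lemma:unweighted} exact; and the ATSPP half splits one vertex into $s$ and $t$, both inheriting all incident edges, with the flow redistribution and cut preservation you describe. One nit on wording: the literal calculation in the proof of Lemma~\ref{lemma:unweighted} uses $\bar c_v=\lfloor 2c_v/M\rfloor$, which with $M=1$ and half-integer $c_v$ gives $\bar c_v=2c_v$ and hence $\delta=1$ (an additive error of $x(E(G))$), not $\delta=0$; it is only because you \emph{redefine} the path length to be $2c_v-1$ that $1+\bar c_v-2c_v=0$ and the error term vanishes, yielding $c(x)=x'(E(G_i'))$. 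Since you do state the correct path length, the substance is right and matches the paper.
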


\newcommand{\bib}[3]{\bibitem[\protect\citeauthoryear{#1}{#2}]{#3}}

\end{document}